\newtheorem{theorem}{Theorem}[section]
\newtheorem{lemma}[theorem]{Lemma}
\newtheorem{proposition}[theorem]{Proposition}
\theoremstyle{definition}
\newtheorem{definition}[theorem]{Definition}
\begin{document}
	\title[Intuitionistic Quantum Logic Perspective]{Intuitionistic Quantum Logic Perspective: Static and Dynamic Revision Operators}
	
	\author[Heng]{Heng Zhou}
	\address{School of Mathematical Science, Beihang University, 37 Xueyuan Road, Haidian District, Beijing, 100191, China}
	\email{zhouheng@buaa.edu.cn}
	
	\thanks{This work was supported by the National Natural Science Foundation of China (Grant No. 12371016, 11871083).}
	
	\author[Yongjun]{Yongjun Wang}
	\address{School of Mathematical Science, Beihang University, 37 Xueyuan Road, Haidian District, Beijing, 100191, China\\
	\textbf{Corresponding author}}
	\email{wangyj@buaa.edu.cn}
	
	\author[Baoshan]{Baoshan Wang}
	\address{School of Mathematical Science, Beihang University, 37 Xueyuan Road, Haidian District, Beijing, 100191, China}
	\email{bwang@buaa.edu.cn}
	
	\author[Jian]{Jian Yan}
	\address{School of Mathematical Science, Beihang University, 37 Xueyuan Road, Haidian District, Beijing, 100191, China}
	\email{jianyanmath@buaa.edu.cn}
	
	\author[Xiaoyang]{Xiaoyang Wang}
	\address{School of Philosophy, Beijing Normal University, Beijing, China}
	\email{wangxiaoyang@bnu.edu.cn}
	
	\subjclass{Primary 81P10, 81P13, 81Q10, 81Q35, 03G12; Secondary 06C15, 06D20}
	
	\keywords{intuitionistic quantum logic, contextuality, projection measurement, truth-value reasoning, revision operator}
	\date{\today}
	
	\begin{abstract}
		The classical belief revision framework, as proposed by Alchourron, Gardenfors, and Makinson, involves the revision of a theory based on eight postulates. In this paper, we focus on the exploration of a revision theory grounded in quantum mechanics, referred to as the natural revision theory. 
		
		There are two reasoning modes in quantum systems: static intuitionistic reasoning, which incorporates contextuality, and dynamic reasoning, which is achieved through projection measurement. We combine the advantages of two intuitionistic quantum logic frameworks, as proposed by D{\"o}ring and Coecke, respectively. Our goal is to establish a truth-value assignment for intuitionistic quantum logic that not only aligns with the inherent characteristics of quantum mechanics but also supports truth-value reasoning. The natural revision theory is then investigated based on this approach.
		
		We introduce two types of revision operators that correspond to the two reasoning modes in quantum systems: static and dynamic revision. Furthermore, we highlight the distinctions between these two operators. Shifting away from classical revision paradigms, we consider the revision of consequence relations in intuitionistic quantum logic. We demonstrate how, within the natural revision theory framework, both revision operators collectively influence the consequence relations. Notably, the outcomes of revision process are impacted by the sequence in which these interweaved operators are deployed.
	\end{abstract}

	\maketitle
	
	
	\section{Introduction}
	
	Classical belief revision was first proposed by Alchourron, Gardenfors, and Makinson in 1985\cite{1985On}. A theory is contracted or revised in terms of six elementary and two supplementary postulates. Their research centered on the methodology for selecting subsets from an original theory that align with newly acquired message, a process typically influenced by subjective criteria. In this paper, our primary focus is on the exploration of a revision theory known as ``natural revision," which is grounded in the fundamental principles of quantum mechanics. This methodology enables a selection process governed by objective criteria, thereby distinguishing it from methods reliant on subjective judgment.
	
	The logical characterization of quantum mechanics was introduced by Birkhoff and von Neumann in 1936\cite{birkhoff1936logic}. They showed that a quantum system can be described via Hilbert space. There is a one-to-one correspondence between projection operators of a quantum system and closed subspaces of Hilbert space. The operations between closed subspaces satisfy orthomodular law. Quantum logic, represented by an orthomodular lattice composed of closed subspaces, is referred to as standard quantum logic. It is still a hot spot in the field of quantum computing\cite{saharia2019elementary,babu2023quantum}.
	
	The reasoning of quantum logic is non-monotonic reasoning based on features of quantum mechanics. Monotonicity means that new message added to the reasoning does not affect the result of the implication of old messages. In a quantum system, when reasoning is revised by projection measurement as new message, some results implied from the old ones are no longer valid because the projection measurement changes the state of the system. We intend to investigate the nature of reasoning in quantum logic and thus provide a natural revision theory determined by the fundamental laws of quantum mechanics.
	
	Standard quantum logic takes closed subspaces of Hilbert space as elements, the intersection of closed subspaces as the meet operation, the span of closed subspaces as the join operation, and the orthogonal complement of a closed subspace as the negation operation, which forms an orthomodular lattice called the property lattice. There is no implication connective in the property lattice, so truth-value reasoning cannot be carried out. 
	
	Finch\cite{finch_1969,1970Quantum} defined the Sasaki hook $\rightsquigarrow_s$ and it is used as the implication connective in property lattice. The left adjoint of Sasaki hook, called Sasaki projection $\wedge_s$, characterizes projection measurements in a quantum system. Roman and Zuazua\cite{roman1999quantum} discussed that the Sasaki adjoint is some kind of deduction theorem. Engesser and Gabbay\cite{2002Quantum} inspected the link between non-monotonic consequence relations and belief revision. They argued that reasoning in quantum logic is a non-monotonic consequence relation, and the Sasaki projection is the revision operation of consequence relations. The adjoint operation Sasaki hook is the internalizing connective. A consequence revision system (CRS) is constructed by using formulas as operators. They examined the consequence relations based on truth-value reasoning in the property lattice, and regard $a,b,c$ in $a\wedge_s b\leq c$ as arguments. With $\wedge_s$ as the revision operation, formulas $a$ and $b$ can be used as operators to revise a consequence relation. Sasaki adjoint can operate in the form of modus ponens, i.e., $a\wedge_s(a\rightsquigarrow_s b)\leq b$. However, because of the different meanings of elements in the operation, Sasaki adjoint cannot play the role of truth-value reasoning. 
	
	Bob Coecke\cite{coecke2004sasaki} indicated that Sasaki hook is not a static implication connective but induces a dynamic one with a parameter in dynamic operational quantum logic. Define $\varphi_a^*(b):=a\wedge_s b$ as well as $\varphi_{a,*}(b):=a\rightsquigarrow_s b$, and $\varphi_a^* (b)\leq c \Leftrightarrow b\leq \varphi_{a,*}(c)\Leftrightarrow b\stackrel{\varphi_a}{\rightsquigarrow} c$ is satisfied, in which $b$ and $c$ are variables while $a$ is a parameter. Obviously, the meaning of $a$ in the operation differs from that of $b$ and $c$. $\varphi_a^* (b)\leq c$ represents that with space $a$ as the projection space, the result of projecting $b$ space to $a$ is contained in $c$ space. In \cite{2002Quantum}, considering only the revision form based on the projection operation is not enough to characterize quantum reasoning. We investigate the natural revision theory based on quantum logic by exploring both static and dynamic aspects of truth-value reasoning in quantum logic.
	
	In our study, we explore the utilization of topos theory to depict the contextuality of properties within quantum systems. Isham et al.\cite{isham1998topos,hamilton2000topos} characterized physical quantities of a quantum system by presheaf structure. They emphasized that contextuality is a crucial characteristic of quantum systems, and properties in a quantum system should be assigned local truth values within specific contexts. D{\"o}ring et al.\cite{doring2008topos1,doring2008topos2} defined a mapping called daseinisation to approximate a property across all contexts. They suggested that the global truth value of properties can be determined based on the outer daseinisation. D{\"o}ring\cite{doring2011topos,doring2016topos} showed that elements corresponding to the outer daseinisations of properties generate a bi-Heyting algebra. 
	
	However, upon investigating the implication connective in this bi-Heyting algebra, we find that the implication connective in the sense of Heyting algebra is trivial,	while the one in the sense of co-Heyting algebra is non-trivial. Therefore, the elements selected by D{\"o}ring constitute a co-Heyting algebra that satisfies the law of excluded middle, whereas ordinary truth-value reasoning should rely on a Heyting algebra that satisfies the law of non-contradiction as the reasoning structure. 
	
	Coecke\cite{coecke2002quantum} constructed an intuitionistic quantum logic based on Heyting algebra using the downsets of elements in the property lattice. However, while Coecke's structure aligns with conventional truth-value reasoning, it does not involve contextuality, and is insufficient to describe the characteristics of quantum mechanics. In their research on the dynamic aspect of quantum logic\cite{coecke2004logic}, Coecke et al. highlighted that a causal relationship based on actions can be induced to replace the partial order relationship between elements. They defined a pair of adjoint according to this causal relationship, where the adjoint operations respectively represent the propagation and causation of properties. Reasoning in dynamic operational quantum logic is then conducted based on this adjoint relationship. We intend to propose an intuitionistic quantum logic that combines the respective advantages of static and dynamic quantum logics mentioned above. This logic satisfies the characteristics of quantum mechanics and allows for truth-value reasoning.
	
	In our approach, we propose selecting elements according to the inner daseinisation for the global truth value of properties. This selection guarantees that the chosen elements form a Heyting algebra. Truth-value reasoning within a Heyting algebra is static reasoning. However, projection measurement, serving as an external action, transforms one Heyting algebra into another, and the truth-value assignment generated by the Sasaki projection represents dynamic truth-value changes between Heyting algebras.
	
	We believe that a consequence relation occurs within a Heyting algebra. There are two types of revision operations for a consequence relation. The first is to add new assumptions into the Heyting algebra as new message, corresponding to the $\wedge$ operation as revision, which we refer to as static revision. The second is the action of projection measurement between Heyting algebras as new message, corresponding to the $\wedge_s$ operation as revision, which we refer to as dynamic revision. Both types of revision use formulas as operators to revise a consequence relation. The distinction lies in the methodology employed to revise the antecedents of a consequence relation. Static revision changes the consequence relation within a single Heyting algebra, resulting in a new consequence relation that remains within the original Heyting algebra. Meanwhile, dynamic revision involves altering the consequence relation from one within a given Heyting algebra to a new consequence relation situated within a different Heyting algebra.
	
	Engesser et al. utilized the orthomodular lattice to represent the natural revision theory. However, the orthomodular lattice fails to differentiate between the static and dynamic operations, resulting in a single reasoning mode. Consequently, their work \cite{2002Quantum} only includes one revision operator, which corresponds to the ``expansion" in the revision theory. In contrast, intuitionistic quantum logic allows for the distinction between static reasoning and dynamic one. In our approach, we propose two kinds of revision operators that correspond to the ``expansion" and ``revision" in the revision theory, respectively. The outcomes of revision process are influenced by the order in which the interweaved operators are applied. These operators enable the description of the natural revision theory following the fundamental characteristics of quantum mechanics.
	
	In section 2, we provide the basic knowledge and background relevant to this paper. In section 3, we review the existing revision theory, discuss the revision of consequence relation based on the operations in the orthomodular lattice by Engesser et al., and highlight the shortcomings of this method. We then investigate the truth assignment of properties in a quantum system based on topos theory, and present a more suitable method for truth-value reasoning in section 4. In section 5, we discuss the static and dynamic reasoning of quantum logic based on the provided truth assignment, consider the two types of revision operators for a consequence relation, and present the natural revision theory based on the truth-value reasoning of quantum logic. Finally, we summarize the paper and propose prospects for research.
	
	\section{Preliminaries}
	\subsection{Standard quantum logic}
	A quantum physical system is represented by a Hilbert space $\mathcal{H}$ over complex field $\mathbb{C}$, where a physical quantity is represented by a self-adjoint operator $\hat{A}$ in $\mathcal{H}$. The value of a physical quantity $\hat{A}$ is taken in a Borel set $\Delta$, called $\hat{A}\in\Delta$. A projective operator(or projector in short) $\hat{P}$ represents the event of assigning values to a physical quantity, which is denoted by $\hat{P}=E[\hat{A}\in \Delta]$. Projectors are special self-adjoint operators. There is a one-to-one correspondence between projectors and closed subspaces of a Hilbert space. A closed subspace corresponds to a property of a quantum system whereas a one-dimension closed subspace corresponds to a state of the system. We mainly investigate projectors while studying quantum systems. In other words, we do not care what value a physical quantity gets, but whether an event that a quantity gets a certain value is true.
	
	\begin{definition}
		Define $\langle L,\leq,\wedge,\vee,\neg\rangle$ be an ortholattice if $\langle L,\leq,\wedge,\vee \rangle$ is bounded, and a unary operation $\neg$ defined in the lattice satisfies for any $a,b\in L$, there are:
		\begin{itemize}
			\item $\neg(\neg a)=a$
			\item $a\leq b \Leftrightarrow \neg b\leq \neg a$
			\item $a\wedge \neg a=0$
			\item $a\vee \neg a=1$
		\end{itemize}
	\end{definition}
	We make no distinction between lattice $\langle L,\leq,\wedge,\vee \rangle$ and set $L$ of elements of the lattice.
	\begin{definition}
		An ortholattice $L$ is called an orthomodular lattice if, for any $a,b\in L,a\leq b$, the Orthomodular Law is satisfied.
		\begin{align*}
			a\vee (\neg a \wedge b)=b
		\end{align*}
	\end{definition}
	
	Given a Hilbert space $\mathcal{H}$, denote a collection of its closed subspaces $Sub(\mathcal{H})$. Define partial order relation $\leq$ to be the inclusion of closed subspaces. $\wedge$ operation is the intersection of two closed subspaces.  $\vee$ is the span of two closed subspaces. $\neg$ is the complement of a closed subspace. $\langle Sub(\mathcal{H}),\leq,\wedge,\vee,\neg\rangle$ is an orthomodular lattice. There is a one-to-one correspondence between closed subspaces of a Hilbert space and projectors in a quantum system, and a projector represents a property in the system. Hence the orthomodular lattice is called a property lattice. Without causing ambiguity, we denote the property lattice by $L$.
	
	\begin{definition}
		Given a pair of mappings $f:M\rightarrow N$ and $g:N\rightarrow M$. Define $f$ and $g$ be a pair of Galois adjoint, denoted by $f\dashv g$, if $f(a)\leq b\Leftrightarrow a\leq g(b)$ is satisfied for any $a\in M,b\in N$. Call $f$ a left adjoint of $g$, and $g$ a right adjoint of $f$.
	\end{definition}
	
	A pair of operations $\wedge_s$ and $\rightsquigarrow_s$ are introduced into the property lattice to describe the characteristics of quantum mechanics. Furthermore, the mappings $a\wedge_s \_$ and $a\rightsquigarrow_s\_$ form a pair of adjoints as well, called Sasaki adjoint, which is denoted by $a\wedge_s\_\dashv a\rightsquigarrow_s\_$. In which
	\begin{align*}
		a\wedge_s b:=a\wedge(\neg a\vee b)
	\end{align*}
	expresses the projection of $b$ onto $a$, or projecting $b$ by $a$. We notice that there is no commutative law for $\wedge_s$. The right adjoint
	\begin{align*}
		a\rightsquigarrow_s b:=\neg a\vee (a\wedge b)
	\end{align*}
	reflects the reasoning ability of the property lattice. The relation between Sasaki adjoint
	\begin{align*}
		a\wedge_s b\leq c\Leftrightarrow b\leq a\rightsquigarrow_s c
	\end{align*}
	is analogous to modus ponens in formal systems and to obtain
	\begin{align*}
		a\wedge_s (a\rightsquigarrow_s c)\leq c
	\end{align*}

	\subsection{Topos quantum theory}
	An alternative mathematical characterization of quantum logic is given by Isham and Butterfield et al.\cite{isham1998topos,hamilton2000topos} using topos structure $Set^{\mathcal{V}(\mathcal{H})^{op}}$. They emphasize the crucial role of contextuality in depicting quantum logic. They indicate that self-adjoint operators in a Hilbert space $\mathcal{H}$ constitute a von Neumann algebra $N(\mathcal{H})$, and every commutative von Neumann subalgebra $V$ in $N(\mathcal{H})$ represents a context. The collection of all commutative von Neumann subalgebras is denoted by $\mathcal{V}(N(\mathcal{H}))$, and by $\mathcal{V}(\mathcal{H})$ for short. For more knowledge about topos quantum theory, please refer to \cite{flori2012lectures}.
	\begin{definition}
		A context is a commutative von Neumann subalgebra. For a $n$-dimensional context $V$, the set of generators of $V$ is denoted by $\mathcal{F}_{V}=\{\hat{P_1},\hat{P_2},...,\hat{P_n}\}$. Every component of the spectral decomposition of self-adjoint operators in $V$ is an element in $\mathcal{F}_{V}$. Every self-adjoint operator in $V$ is a linear combination of the generators above. If we only consider all projectors in $V$, which is denoted by $P(V)$, then $P(V)$ constitutes a Boolean subalgebra.
	\end{definition}
	\begin{definition}
		Given a $n$-dimensional commutative von Neumann algebra $V$, define the set $\sigma(V)=\{\lambda_1,\lambda_2,...,\lambda_n\}$ to be the spectrum of $V$, where every $\lambda_i$ maps a self-adjoint operator in $V$ to the coefficient of $\hat{P_i}$ component of spectral decomposition of the operator. $\lambda_i$ is of one-to-one correspondence to $\hat{P_i}$.
	\end{definition}
	\begin{definition}
		The spectral presheaf in $Set^{\mathcal{V}(\mathcal{H})^{op}}$ is a contravariant functor $\underline{\Sigma}\colon \mathcal{V}(\mathcal{H})^{op}\rightarrow Set$, which
		\begin{itemize}
			\item for objects: $\underline{\Sigma}(V) \coloneq \sigma(V)$
			\item for morphisms: if $V_2 \subseteq V_1$, i.e. there is a morphism $i_{V_2 V_1}\colon V_2 \rightarrow V_1$ in $\mathcal{V}(\mathcal{H})$, then $\underline{\Sigma}(i_{V_2 V_1})\colon \sigma(V_1)\rightarrow \sigma(V_2)$ is a restriction mapping.
		\end{itemize}
	\end{definition}
	\begin{definition}
		For any context $V$, define an isomorphism between projectors and subsets of the spectral presheaf by
		\begin{align*}
			\alpha_{V}\colon P(V) &\rightarrow Cl(\underline{\Sigma}_{V})\\
			\hat{P} &\mapsto \{\lambda\in\underline{\Sigma}_{V}| \lambda(\hat{P})=1\}
		\end{align*}
		in which $Cl(\underline{\Sigma}_{V})$ represents clopen subsets of $\underline{\Sigma}_{V}$.
	\end{definition}
	
	A property does not belong to every context in general. D{\"o}ring\cite{doring2008topos1,doring2008topos2} indicates that for the projector corresponding to a property, one can find an approximation of the projector in every context. 
	\begin{definition}
		Given a projector $\hat{P}$ corresponding to a property, define two approximations of $\hat{P}$ in context $V$ by
		\begin{itemize}
			\item $\delta^{o}(\hat{P})_V = \bigwedge \{\hat{Q}\in P(V)|\hat{P}\leq\hat{Q} \}$ and
			\item $\delta^{i}(\hat{P})_V = \bigvee \{\hat{Q}\in P(V)|\hat{Q}\leq\hat{P} \}$
		\end{itemize}
		where $\delta^o$ is called outer daseinisation, which means approximate from above; and $\delta^i$ is called inner daseinisation, which means approximate from below.
	\end{definition}
	
	It should be noted that D{\"o}ring employs only the outer daseinisation for approximating a projector. Without causing confusion, D{\"o}ring refers to outer daseinisation simply as daseinisation. However, in this paper, it is necessary to make a clear distinction between outer and inner daseinisation.
	
	\subsection{Consequence relation}
	According to \cite{2002Quantum}, given a class $Fml$ of formulas closed under the connectives $\neg,\wedge,\vee$ and containing $\top$ and $\bot$ for truth and falsity respectively. A consequence relation $\vdash\subseteq Fml\times Fml$ should satisfy the following conditions.
	\begin{itemize}
		\item Reflexivity
		\begin{align*}
			\alpha\vdash\alpha
		\end{align*}
		\item Cut
		\begin{align*}
			\frac{\alpha\wedge\beta\vdash\gamma , \alpha\vdash\beta}{\alpha\vdash\gamma}
		\end{align*}
		\item Restricted Monotonicity
		\begin{align*}
			\frac{\alpha\vdash\beta , \alpha\vdash\gamma}{\alpha\wedge\beta\vdash\gamma}
		\end{align*}
	\end{itemize}
	
	The relation between general non-monotonic consequence relations and revision operators is as follows. Let $\Delta$ be a theory, $\circ$ a revision operation. Define $\vdash_\Delta$ by
	\begin{align*}
		A\vdash_\Delta B \qquad \text{iff} \qquad B\in\Delta\circ A.
	\end{align*}
	Given a $\vdash$, $\circ$ is obtained by
	\begin{align*}
		\Delta\circ A=\{B|\Delta,A\vdash B\}
	\end{align*}
	
	\section{Revision theory in standard quantum logic}
	In their work on standard quantum logic, Engesser et al.\cite{2002Quantum} propose a revision theory based on the Sasaki adjoint operation of property lattice, treating all elements within this operation uniformly. However, Bob Coecke\cite{coecke2004sasaki} has demonstrated that the roles of elements in Sasaki adjoint are distinct. This discrepancy suggests that the revision theory defined by standard quantum logic may not fully align with the fundamental law of quantum mechanics, nor accurately capture its characteristics.
	
	\begin{definition}
		Let $\vdash$ be a consequence relation. Define a internalizing connective $\rightsquigarrow$ of a consequence relation $\vdash$ as follows: For any formulas $\alpha$ and $\beta$, $\alpha\vdash \beta$ if and only if $\vdash\alpha\rightsquigarrow\beta$.
	\end{definition}
	
	It is worth noting that in classical logic, the material implication $\rightarrow$ is the internalizing connective of classical consequence relation, and the definition above corresponds to the deduction theorem of classical logic.
	
	By analogy with operations in classical logic, Engesser et al. try to give a consequence relation and internalizing connective satisfying quantum logic reasoning. Inspired by classical logic, treating formulas as operators, they define a consequence revision system (CRS).
	
	\begin{definition}
		Let $Fml$ be a class of formulas, and $\mathcal{C}$ be a class of consequence relations on $Fml\times Fml$. Let function $F$ be $F:Fml\times\mathcal{C}\rightarrow\mathcal{C}$. Call $F$ an action on $\mathcal{C}$, if for any consistent consequence relations $\vdash\in\mathcal{C}$ and $\alpha,\beta\in Fml$ the following conditions are satisfied.
		\begin{itemize}
			\item $F(\top,\vdash)=\vdash$
			\item $F(\alpha,\vdash)=0 \quad \text{iff} \quad \vdash \neg \alpha$
			\item $F(\beta,F(\alpha,\vdash))=F(\alpha,\vdash) \quad \text{iff} \quad \alpha\vdash\beta$.
		\end{itemize}
	
		The term ``consistent consequence relation" refers to a consequence relation that does not yield the contradiction represented by the formula ``0".
		
		If $F$ is an action on $\mathcal{C}$, then call $\langle \mathcal{C},F\rangle$ a consequence revision system CRS.\textsuperscript{\cite{2002Quantum}}
	\end{definition}
	
	Every formula $\alpha\in Fml$ corresponds to a revision operator $\bar{\alpha}:\mathcal{C}\rightarrow\mathcal{C}$ on consequence relations, which is defined by $\bar{\alpha}\vdash:=F(\alpha,\vdash)$. We denote $\bar{\alpha}\vdash$ by $\vdash_\alpha$.
	
	Treat an action of a formula to a consequence relation as the formula operates on the formulas in the hypothesis set of the consequence relation. Define $*$ be the revision operation on the set of formulas if the following conditions are satisfied. 
	\begin{itemize}
		\item $1*a=a*1=a$
		\item there exists an adjoint $a*\_ \dashv a\rightsquigarrow \_$
		\item $a*b\leq a$
		\item $a\bot b:=a\leq \neg b$ \quad iff \quad $a*b=0$
		\item $b\leq a$ \quad iff \quad $a*b=b$
	\end{itemize}
	Then, the action of function $F:Fml\times \mathcal{C}\rightarrow\mathcal{C}$ on a consequence relation is $F(a,\vdash_b)=\vdash_{a*b}$.
	
	Regarding the operation $\wedge_s$ in property lattice $L$ as the revision operation, partial relation on property lattice as consequence relation, the connective $\rightsquigarrow_s$ as the internalizing connective of consequence relation, it is easy to verify that $\wedge_s$ satisfies the conditions of revision operation, and for any $x\in L$, $\vdash_x$ satisfies the conditions of consequence relation.
	
	Now in property lattice $L$, there is
	\begin{align*}
		a\wedge_s b\vdash c \Leftrightarrow a\vdash_b c \Leftrightarrow \vdash_{a\wedge_s b}c \Leftrightarrow\vdash_b a\rightsquigarrow_s c \Leftrightarrow b\vdash a\rightsquigarrow_s c
	\end{align*}
	
	Notice that for $a\wedge_s b\vdash c\Leftrightarrow a\vdash_b c$, the consequence relation $\vdash$ becomes the consequence relation $\vdash_b$ when revised by formula $b$. That is $F(b,\vdash)=\vdash_b$. By the definition of revision operation, let $\vdash=\vdash_1$, there is $F(b,\vdash)=F(b,\vdash_1)=\vdash_{b\wedge_s 1}$. Furthermore, for $a\wedge_s b\vdash c \Leftrightarrow a\vdash_b c \Leftrightarrow \vdash_{a\wedge_s b} c$, there is $F(a,F(b,\vdash_1))=F(a,\vdash_b)=\vdash_{a\wedge_s b}$. Firstly, $b$ revises the consequence $\vdash = \vdash_1$, and then $a$ revises the result from the previous step. 
	
	This approach is to consider the operations of property lattice as purely mathematical, which ignores the physical meaning of the operations. For equation $a\wedge_s b = a\wedge(\neg a\vee b)=c$, this approach treats the variables $a,b,c$ as indistinctive variables. However, when considering the physical interpretation of operation $\wedge_s$, the equation $a\wedge_s b=c$ should be interpreted as taking space $a$ as a projection space, projecting space $b$ onto space $a$, and obtaining space $c$ as the result. Moreover, Bob Coecke showed that the Sasaki hook is not a static implication connective but induces a dynamic one with a parameter. Define $\varphi_a^*(b):=a\wedge_s b$ and $\varphi_{a,*}(b):=a\rightsquigarrow_s b$, and then $\varphi_a^* (b)\leq c \Leftrightarrow b\leq \varphi_{a,*}(c)\Leftrightarrow b\stackrel{\varphi_a}{\rightsquigarrow} c$ is obtained, where $b$ and $c$ are variables but $a$ is a parameter, which means they are elements from different levels. As external operations of property lattice $L$, the projection operation and its adjoint do not carry out inside a property lattice, but between two property lattices $L_1$ and $L_2$. Specifically, the element $b$ assigned true in property lattice $L_1$ is acted upon by a projection operation $\varphi_a$ corresponding to space $a$, resulting in the element $c$ in property lattice $L_2$ being assigned true. Therefore, the consequence relation system(CRS) with internalizing connective based on standard quantum logic may not reflect the features of truth-value operations in quantum logic. 
	
	The revision theory based on standard quantum logic may not accurately reflect the nature of quantum mechanics. To address this problem, we attempt to define an alternative natural revision theory conforming to the characteristics of quantum mechanics. To achieve this, the logical formalism of quantum systems and the truth-value assignment of quantum logic need to be reconsidered. 
	
	\section{Truth assignment in topos quantum theory}
	According to the Kochen-Specker theorem\cite{kochen1990problem}, all physical quantities cannot be assigned simultaneously and satisfy the functional relation between physical quantities. Isham et al. indicate that contextuality is the core concept to depict a quantum system. Investigating the assignment of physical quantities should examine the assignment of quantities in all contexts. Hence Isham et al. intend to characterize quantum systems with topos theory and inspect the truth-value assignment therein.
	
	\subsection{Bi-Heyting algebra structure}
	A subobject $\underline{S}$ of $\underline{\Sigma}$ such that the components $\underline{S}_{V}$ are clopen sets for all $V$ is called a clopen subobject.
	
	\begin{definition}
		Let $N(\mathcal{H})$ be the von Neumann algebra constituted by self-adjoint operators in Hilbert space $\mathcal{H}$. Let $L$ be the lattice constituted by projection operators in the von Neumann algebra. The mapping
		\begin{align*}
			\underline{\delta}^{o}\colon L&\rightarrow Sub_{cl}(\underline{\Sigma})\\
			\hat{P}&\mapsto \underline{\delta^{o}(\hat{P})}\coloneq (\alpha_{V}(\delta^{o}(\hat{P})_V))_{V\in N(\mathcal{H})}
		\end{align*}
		is called the outer daseinisation of projections. Here, $\underline{\delta^{o}(\hat{P})}\in Sub_{cl}(\underline{\Sigma})$ represents the subobject of the spectral presheaf $\underline{\Sigma}$.
	\end{definition}
	D{\"o}ring shows that the mapping $\underline{\delta}^{o}$ preserves all joins and is an order-preserving injection, but not a surjection.
	It is evident that $\underline{\delta^{o}(\hat{0})} = \underline{0}$, the empty subobject, and $\underline{\delta^{o}(\hat{1})} = \underline{\Sigma}$. For joins, we have
	\begin{align*}
		\forall \hat{P},\hat{Q}\in L\colon \underline{\delta^{o}(\hat{P}\vee \hat{Q})}= \underline{\delta^{o}(\hat{P})}\vee\underline{\delta^{o}(\hat{Q})}.
	\end{align*}
	However, for meets, we have
	\begin{align*}
		\forall \hat{P},\hat{Q}\in L\colon \underline{\delta^{o}(\hat{P}\wedge \hat{Q})}\leq \underline{\delta^{o}(\hat{P})}\wedge\underline{\delta^{o}(\hat{Q})}.
	\end{align*}
	In general, $\underline{\delta^{o}(\hat{P})}\wedge\underline{\delta^o (\hat{Q})}$ is not of the form $\underline{\delta^{o}(\hat{R})}$ for any projection $\hat{R}\in L$.
	
	D{\"o}ring argues that the elements of $Sub_{cl}(\underline{\Sigma})$ form a complete bi-Heyting algebra. Let $\underline{S},\underline{T}\in Sub_{cl}(\underline{\Sigma})$ be two clopen subobjects. We have the following equations:
	\begin{align*}
		\forall V\in \mathcal{V}(\mathcal{H})\colon (\underline{S}\wedge\underline{T})_V &= \underline{S}_V \cap \underline{T}_V \\
		(\underline{S}\vee\underline{T})_V &= \underline{S}_V \cup \underline{T}_V.
	\end{align*}
	
	For each $\underline{S}\in Sub_{cl}(\underline{\Sigma})$, the functor
	\begin{align*}
		\underline{S}\wedge \_\colon Sub_{cl}(\underline{\Sigma}) \rightarrow Sub_{cl}(\underline{\Sigma})
	\end{align*}
	has a right adjoint given by
	\begin{align*}
		\underline{S}\rightarrow \_ \colon Sub_{cl}(\underline{\Sigma}) \rightarrow Sub_{cl}(\underline{\Sigma}).
	\end{align*}
	The Heyting implication is determined by the adjunction:
	\begin{align*}
		\underline{S}\wedge \underline{R}\leq \underline{T} \quad \text{iff} \quad \underline{R} \leq (\underline{S}\rightarrow\underline{T}).
	\end{align*}
	This implies that
	\begin{align*}
		(\underline{S}\rightarrow\underline{T})=\bigvee\{\underline{R}\in Sub_{cl}(\underline{\Sigma})| \underline{S}\wedge\underline{R}\leq\underline{T}\}.
	\end{align*}
	The contextwise definition is as follows: for every $V\in \mathcal{V}(\mathcal{H})$,
	\begin{align*}
		(\underline{S}\rightarrow\underline{T})_V = \{\lambda\in \underline{\Sigma}_{V}| \forall V' \subseteq V \colon \text{if} \ \lambda|_{V'}\in \underline{S}_{V'}, \text{then} \ \lambda |_{V'}\in \underline{T}_{V'}\}.
	\end{align*}
	The Heyting negation $\neg$ for each $\underline{S}\in Sub_{cl}(\underline{\Sigma})$ is defined as
	\begin{align*}
		\neg \underline{S} \coloneq (\underline{S}\rightarrow \underline{0}).
	\end{align*}
	The contextwise expression for $\neg\underline{S}$ is given by
	\begin{align*}
		(\neg\underline{S})_V = \{\lambda\in\underline{\Sigma}| \forall V'\subseteq V\colon \lambda|_{V'}\notin \underline{S}_{V'} \}.
	\end{align*}
	
	To better align with the adjunction formalism, we have slightly modified D{\"o}ring's definition of the operations on co-Heyting algebra. For any $\underline{S}\in Sub_{cl}(\underline{\Sigma})$, the functor
	\begin{align*}
		\_\vee\underline{S}\colon Sub_{cl}(\underline{\Sigma}) \rightarrow Sub_{cl}(\underline{\Sigma})
	\end{align*}
	has a left adjoint
	\begin{align*}
		\_ \leftarrow \underline{S} \colon Sub_{cl}(\underline{\Sigma}) \rightarrow Sub_{cl}(\underline{\Sigma})
	\end{align*}
	which is referred to as co-Heyting implication. It is characterized by the adjunction
	\begin{align*}
		\underline{T}\leftarrow \underline{S}\leq \underline{R} \quad \text{iff} \quad \underline{T} \leq (\underline{R}\vee \underline{S}),
	\end{align*}
	hence
	\begin{align*}
		(\underline{T}\leftarrow\underline{S})=\bigwedge\{\underline{R}\in Sub_{cl}(\underline{\Sigma})| \underline{T}\leq\underline{R}\vee\underline{S}\}.
	\end{align*}
	We also define a co-Heyting negation $\sim$ for each $\underline{S}\in Sub_{cl}(\underline{\Sigma})$ by
	\begin{align*}
		\sim \underline{S} \coloneq (\underline{\Sigma}\leftarrow \underline{S}).
	\end{align*}
	
	\begin{theorem}
		The definition of the Heyting implication connective is considered trivial for truth-value reasoning, whereas the definition of the co-Heyting implication connective presents non-trivial aspects.
	\end{theorem}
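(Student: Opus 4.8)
The plan is to localize everything at the least context $V_0\coloneqq\mathbb{C}\hat{1}$, which is a subalgebra of every context and whose spectral presheaf component $\underline{\Sigma}_{V_0}$ consists of a single point. The computational core is the following observation: since $P(V_0)=\{\hat{0},\hat{1}\}$, for any projection $\hat{P}\neq\hat{0}$ the only element of $P(V_0)$ above $\hat{P}$ is $\hat{1}$, so $\delta^{o}(\hat{P})_{V_0}=\hat{1}$ and hence $\underline{\delta^{o}(\hat{P})}_{V_0}=\underline{\Sigma}_{V_0}$ is the full (one-point) component. Dually, a clopen subobject whose $V_0$-component is empty is forced to be $\underline{0}$, because any point of a higher context restricts into that component; equivalently, every nonzero clopen subobject has full $V_0$-component. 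These two facts are all that the Heyting part needs.

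On the Heyting side I would substitute this into the contextwise description of $\neg$. Let $\underline{S}\neq\underline{0}$ be any clopen subobject (in particular any nonzero element of the bi-Heyting algebra generated by the outer daseinisations). For every context $V$ and every $\lambda\in\underline{\Sigma}_V$ we have $\lambda|_{V_0}\in\underline{\Sigma}_{V_0}=\underline{S}_{V_0}$, so the defining clause ``$\forall V'\subseteq V\colon \lambda|_{V'}\notin\underline{S}_{V'}$'' fails already at $V'=V_0$; hence $(\neg\underline{S})_V=\emptyset$ for all $V$, i.e.\ $\neg\underline{S}=\underline{0}$. Consequently $\neg\neg\underline{S}=\underline{\Sigma}$ for every nonzero $\underline{S}$: the Heyting negation collapses all nonzero truth values to $\underline{0}$ and the double negation collapses them to $\underline{\Sigma}$, so neither $\neg$ nor $\to$ can support reductio- or double-negation-style truth-value reasoning. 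This is the precise sense in which the Heyting implication connective is trivial, and it can be stated equivalently through the adjunction (the only $\underline{R}$ with $\underline{S}\wedge\underline{R}\leq\underline{0}$ is $\underline{0}$).

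On the co-Heyting side I would show that for D{\"o}ring's elements the picture is genuinely different: whenever $\hat{0}\neq\hat{P}\neq\hat{1}$, the co-Heyting negation $\sim\underline{\delta^{o}(\hat{P})}=(\underline{\Sigma}\leftarrow\underline{\delta^{o}(\hat{P})})$ is strictly above $\underline{0}$ and strictly below $\underline{\Sigma}$. Since $\underline{\delta}^{o}$ preserves joins, $\underline{\delta^{o}(\hat{1}-\hat{P})}\vee\underline{\delta^{o}(\hat{P})}=\underline{\delta^{o}(\hat{1})}=\underline{\Sigma}$, so $\underline{\delta^{o}(\hat{1}-\hat{P})}$ occurs in the meet defining $\sim\underline{\delta^{o}(\hat{P})}$, giving $\sim\underline{\delta^{o}(\hat{P})}\leq\underline{\delta^{o}(\hat{1}-\hat{P})}$, which is properly below $\underline{\Sigma}$ because $\underline{\delta}^{o}$ is an order-preserving injection (hence an order-embedding) and $\hat{1}-\hat{P}\neq\hat{1}$. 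On the other hand the co-Heyting adjunction forces the identity $\underline{\delta^{o}(\hat{P})}\vee\sim\underline{\delta^{o}(\hat{P})}=\underline{\Sigma}$, so $\sim\underline{\delta^{o}(\hat{P})}=\underline{0}$ would yield $\underline{\delta^{o}(\hat{P})}=\underline{\Sigma}$ and hence $\hat{P}=\hat{1}$, a contradiction. Thus $\sim$ separates genuinely distinct elements, and the co-Heyting implication is non-trivial.

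I expect the main work to be bookkeeping rather than anything deep. First, one should fix precisely which ambient structure the implications are evaluated in — the full lattice $Sub_{cl}(\underline{\Sigma})$ or the bi-Heyting algebra generated by the daseinisations — and check that the verdict is insensitive to this, which it is, since the triviality argument uses only that $\wedge,\vee$ and $\to$ are computed componentwise and that the $V_0$-component detects $\underline{0}$, while the non-triviality of $\sim$ uses only join-preservation of $\underline{\delta}^{o}$ and the abstract co-Heyting adjunction. Second, and more importantly, the whole argument rests on $\mathbb{C}\hat{1}$ being a context in $\mathcal{V}(\mathcal{H})$: if $\mathcal{V}(\mathcal{H})$ had no least element then the Heyting negation of a daseinisation need not vanish (it would be a componentwise complement on the maximal contexts), so this convention must be stated explicitly where the core lemma is proved.
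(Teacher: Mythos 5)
Your proposal has a genuine gap on the Heyting side: the entire triviality argument is anchored at the least context $V_0=\mathbb{C}\hat{1}$, but the paper (following the standard convention in topos quantum theory) explicitly \emph{excludes} the trivial context $V=\{0,1\}$ from $\mathcal{V}(\mathcal{H})$; the minimal contexts are of the form $V_{min}=\{0,\hat{P},\hat{P}^{\bot},1\}$. You flagged this dependency yourself in your last paragraph, but the resolution goes against you: without $V_0$, your claim that $(\neg\underline{S})_V=\emptyset$ for all $V$ and all nonzero $\underline{S}$ is simply false. The paper exhibits the counterexample: if $\underline{S}_{V_{min}}=\{\lambda_{\hat{P}}\}$ then $\neg\underline{S}_{V_{min}}=\{\lambda_{\hat{P}^{\bot}}\}$, so $\neg\underline{S}\neq\underline{0}$. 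The correct argument is subtler and genuinely local: for a non-minimal context $V$ and $\lambda\notin\underline{S}_V$ one must find a proper, \emph{nontrivial} subcontext $V'\subseteq V$ and some $\lambda_i\in\underline{S}_V$ with $\lambda|_{V'}=\lambda_i|_{V'}$, which is what drives the paper's Case 5. Your shortcut of restricting to the one-point component $\underline{\Sigma}_{V_0}$, where all spectral elements collapse, is not available.

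A second, smaller gap: the theorem concerns the implication $\underline{S}\rightarrow\underline{T}$ for arbitrary $\underline{T}$, and the paper's notion of triviality is the dichotomy $(\underline{S}\rightarrow\underline{T})_V\in\{\underline{T}_V,\underline{1}_V\}$ in every context. You only treat the special case $\underline{T}=\underline{0}$ (negation and double negation); even granting your $V_0$ convention, the case of general $\underline{T}$ would still need the contextwise analysis, since for nonzero $\underline{S},\underline{T}$ the constraint at $V_0$ is vacuously satisfied and decides nothing about higher contexts. Your co-Heyting half, by contrast, is sound and is a nice alternative to the paper's computation: showing $\underline{0}<\sim\underline{\delta^{o}(\hat{P})}\leq\underline{\delta^{o}(\hat{1}-\hat{P})}<\underline{\Sigma}$ via join-preservation of $\underline{\delta}^{o}$, the excluded-middle identity, and injectivity suffices for non-triviality, and avoids the paper's explicit set-difference formula $(\underline{T}\leftarrow\underline{S})=\bigwedge\{\underline{R}\mid\forall V,\ \underline{T}_V\backslash\underline{S}_V\subseteq\underline{R}_V\}$.
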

	Here, triviality refers to the failure of the implication connective to yield effective information that would facilitate logical reasoning.
	\begin{proof}
		
		We first examine the definition of Heyting implication connective $\underline{S}\rightarrow\underline{T}$. The proof is conducted under the following cases. 
		
		Case 1: The value of $\underline{S}$ is the empty set in a particular context $V$, denoted as $\underline{S}_V =\emptyset$. In this case, $\underline{S}_V \rightarrow\underline{T}_V =\underline{0}_V \rightarrow\underline{T}_V =\underline{1}_V$. Specifically, if $\forall V, \underline{S}_V=\emptyset$ or equivalently $\underline{S}=\underline{0}$, then $\underline{S}\rightarrow\underline{T}=\underline{0}\rightarrow\underline{T}=\underline{1}$.
		
		Case 2: The value of $\underline{T}$ is the empty set in a particular context $V$ but that of $\underline{S}$ is not, denoted as $\underline{T}_V =\emptyset$. In this case, $\underline{S}_V \rightarrow\underline{T}_V =\underline{S}_V \rightarrow\underline{0}_V =\underline{0}_V=\underline{T}_V$. Specifically, the value of $\underline{T}$ is the empty set in all contexts, while that of $\underline{S}$ is not always the empty set. That is, $\forall V, \underline{T}_V =\emptyset$ or equivalently $\underline{T}=\underline{0}$. Under this setting, $\underline{S}\rightarrow\underline{T}=\underline{S}\rightarrow\underline{0}=\underline{0}=\underline{T}$.
		
		Case 3: The value of $\underline{S}$ in all contexts is a subset of $\underline{T}$, expressed as $\forall V, \underline{S}_V \subseteq\underline{T}_V$. Consequently, $\underline{S}\leq\underline{T}$, and it is evident that $\underline{S}\rightarrow\underline{T}=\underline{1}$ holds.
		
		
		Case 4: Consider the values of $\underline{S}$ and $\underline{T}$ in the minimal context $V_{min}$, where the minimal context has the form of $V_{min}=\{0,\hat{P},\hat{P}^{\bot},1\}$. Excluding the four cases where $\underline{S}$ and $\underline{T}$ respectively take $\underline{0}$(or $\underline{1}$) and $\underline{1}$(or $\underline{0}$), which are included in the cases mentioned above, assume $\underline{S}_{V_{min}}=\{\lambda_{\hat{P}}\}$ and $\underline{T}_{V_{min}}=\{\lambda_{\hat{P}^{\bot}}\}$. In this case, it is evident that $(\underline{S}\rightarrow\underline{T})_{V_{min}}=\{\lambda_{\hat{P}^{\bot}}\}=\underline{T}_{V_{min}}$.
		
		Case 5: The general case. Consider the values of $\underline{S}$ and $\underline{T}$ in a non-minimal context $V$, where $\underline{S}_V$ is not a subset of $\underline{T}_V$. Without loss of generality, let $\underline{S}_V = \{\lambda_i\}_{i\in I}$ and $\underline{T}_V = \{\lambda_j\}_{j\in J}$ (where $I,J$ for index set). For all $\lambda\in \underline{T}_V$, $\lambda$ satisfies the constraint of $(\underline{S}\rightarrow\underline{T})_V$ obviously. For all $\lambda\notin\underline{T}_V$, if $\lambda\in\underline{S}_V$, $\lambda$ does not satisfy the constraint of $(\underline{S}\rightarrow\underline{T})_V$ in $V$. If $\lambda\notin\underline{S}_V$, one can always find a $V'\subseteq V$ and some $\lambda_i \in \underline{S}_V$ satisfying $\lambda |_{V'} = \lambda_i |_{V'}$ such that  $\lambda$ does not satisfy the constraint of $(\underline{S}\rightarrow\underline{T})_V$ in $V'$. Therefore $(\underline{S}\rightarrow\underline{T})_V = \underline{T}_V$ in $V$.
		
		By the attributes of presheaf structure, it can be understood that if $(\underline{S}\rightarrow\underline{T})_V =\underline{1}_V$ holds within context $V$, then $(\underline{S}\rightarrow\underline{T})_{V'} =\underline{1}_{V'}$ holds for all $V'\subseteq V$.

		In summary, aside from the contexts where $\underline{S}_V \subseteq\underline{T}_V$ (including $\underline{S}_V=\emptyset$), which satisfy $\underline{S}_V \rightarrow\underline{T}_V =\underline{1}_V$, in all other instances, $\underline{S}_V \rightarrow\underline{T}_V =\underline{T}_V$. In other words, within any given context $V$, the outcome of $(\underline{S} \rightarrow\underline{T})_V$ can only be either $\underline{T}_V$ or $\underline{1}_V$. Moreover, once $(\underline{S} \rightarrow\underline{T})_V=\underline{1}_V$ holds in a certain context $V_1$, then for all $V_{1}^{'}\subseteq V_1$, it holds that $\underline{S}_{V_{1}^{'}} \rightarrow\underline{T}_{V_{1}^{'}}=\underline{1}_{V_{1}^{'}}$.

		Meanwhile, consider the definition of co-Heyting implication $(\underline{T}\leftarrow\underline{S})$. The general case will suffice. Generally, suppose $\underline{S}_V = \{\lambda_i\}_{i\in I}$ and $\underline{T}_V = \{\lambda_j\}_{j\in J}$. Take $\{\lambda_k\}_{k\in K} = \{\lambda_j\}_{j\in J}\backslash\{\lambda_i\}_{i\in I}$, and we have $\{\lambda_j\}|_{V'}\subseteq \{\lambda_k\}|_{V'}\cup \{\lambda_i\}|_{V'}$ for every $V'\subseteq V$. Hence $(\underline{T}\leftarrow\underline{S})=\bigwedge\{\underline{R}\in Sub_{cl}(\underline{\Sigma})| \forall V, \  \underline{T}_V\backslash\underline{S}_V \subseteq \underline{R}_V \}$.
	\end{proof}
	
	Specifically, we can examine the definition of Heyting negation $\neg \underline{S} \coloneq (\underline{S}\rightarrow \underline{0})$. Given that $\underline{S}_V = \{\lambda_i\}_{i\in I}$ for some $V$, we consider $\neg\underline{S}_V$. For every $\lambda_j \in \underline{\Sigma}_V$, if $\lambda_j \in \{\lambda_i\} _{i\in I}$, then $\lambda_j$ does not satisfy the constraint of $\neg\underline{S}_V$ in $V$. If $\lambda_j \notin \{\lambda_i\} _{i\in I}$, then one can always choose a $V'\subseteq V$ satisfying $\lambda_i |_{V'} = \lambda_j |_{V'}$ such that $\lambda_j$ does not satisfy the constraint of $\neg\underline{S}_V$ in $V'$. Hence $\neg\underline{S}_V$ could only be the empty set in $V$.
	
	In the above discussion, two premises are involved: firstly, $\underline{S}$ must select at least one $\lambda$ within the context $V$, and secondly, $V$ must possess a non-trivial subcontext. Addressing these two premises, we examine two special cases. For the first, consider $\underline{S}=\underline{0}$, where $\underline{0}$ does not select any $\lambda$ within $V$, thereby precluding further discussion. It is evident that $\neg\underline{0}=\underline{1}$. For the second case, it is well-known that when considering all subalgebras (i.e., contexts) of $\mathcal{V}(\mathcal{H})$, the context $V=\{0,1\}$ is excluded. Thus, the minimal context in our discussion is in the form of $V_{min}=\{0,\hat{P},\hat{P}^{\bot},1\}$. Consider $\underline{S}$ selecting $\lambda$ in $V_{min}$. Suppose $\underline{S}_{V_{min}}=\{\lambda_{\hat{P}}\}$, then $\neg \underline{S}_{V_{min}}=\{\lambda_{\hat{P}^{\bot}}\}$, and no smaller context $V'$ than $V_{min}$ could lead to a contradiction, which indicates $\neg\underline{S}\neq \underline{0}$. Therefore, aside from these two special cases, for all other situations, the discussion holds true, namely, $\neg\underline{S}=\underline{0}$.
	
	Similarly, consider the definition of co-Heyting negation
	\begin{align*}
		\sim\underline{S} = \bigwedge\{\underline{R}\in Sub_{cl}(\underline{\Sigma})| \underline{\Sigma}\leq\underline{R}\vee\underline{S}\}=\bigwedge\{\underline{R}\in Sub_{cl}(\underline{\Sigma})| \underline{\Sigma} = \underline{R}\vee\underline{S}\}.
	\end{align*}
	Provided that $\underline{S}_V = \{\lambda_i\}_{i\in I}$, we have $\sim\underline{S}_V = \underline{\Sigma}_V \backslash \{\lambda_i\}_{i\in I} :=\{\lambda_j\}_{j\in J}$. For any $V'\subseteq V$, the constraint $\underline{\Sigma}_{V'} = \sim\underline{S}_{V'}\vee\underline{S}_{V'}$ will always be satisfied.
	
	We note that $\neg\underline{S}$, as well as Heyting implication, is defined by satisfying the law of non-contradiction $\underline{S}\wedge\neg\underline{S} = 0$, while $\sim\underline{S}$, as well as co-Heyting implication, is defined by satisfying the law of excluded middle $\underline{S}\vee\sim\underline{S} = 1=\underline{\Sigma}$.
	
	In this paper, we aim to construct the truth-value reasoning that is suitable for Heyting algebra. To achieve this, we intend to select elements in contexts based on the law of non-contradiction. By doing so, we can choose elements that satisfy the logical connectives in Heyting algebra.

	\subsection{Heyting algebra structure}
	In our perspective, for any projection $\hat{P}$, the role of daseinisation is to find the ``approximation" of $\hat{P}$ in every context. For a given context $V$, if $\hat{P}\in V$, the approximation of $\hat{P}$ in the context is $\hat{P}$ itself. If $\hat{P}\notin V$, then the approximation of $\hat{P}$ is the element closest to $\hat{P}$ in the context. The outer daseinisation of $\hat{P}$ is the element just larger than $\hat{P}$, while the inner daseinisation is the element just smaller than $\hat{P}$. We regard the daseinisation of $\hat{P}$ as an ``image" of $\hat{P}$ in every context.
	\begin{proposition}\label{prop1}
		For any $\hat{P}\in L$ and any $V\in \mathcal{V}(\mathcal{H})$, we have $\delta^i(\hat{P})_V \bot \delta^o(\hat{P}^{\bot})_V$, i.e., $\delta^i(\hat{P})\bot\delta^o(\hat{P}^{\bot})$ holds for every context (namely contextwisely).
	\end{proposition}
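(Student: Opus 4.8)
The plan is to prove the stronger statement that, in every context $V$, the projection $\delta^{o}(\hat{P}^{\bot})_V$ is exactly the orthocomplement of $\delta^{i}(\hat{P})_V$; the asserted orthogonality then follows immediately from the ortholattice axiom $a\wedge\neg a=0$. The input I would use is that for a commutative von Neumann subalgebra $V$ the projection set $P(V)$ is a Boolean subalgebra of the property lattice $L$: it is closed under the orthocomplement $\hat{Q}\mapsto\hat{Q}^{\bot}$, and since any two projections in $V$ commute the meet and join of elements of $P(V)$ agree with $\wedge$ and $\vee$ computed in $L$. As $V$ is weakly closed, $P(V)$ is moreover a complete sublattice, so the (possibly infinite) meets and joins in the definitions of $\delta^{o}$ and $\delta^{i}$ land in $P(V)$ and coincide with the ambient operations of $L$.

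Then I would compute directly. By definition $\delta^{i}(\hat{P})_V=\bigvee\{\hat{Q}\in P(V)| \hat{Q}\leq\hat{P}\}$ is the largest projection of $P(V)$ below $\hat{P}$. Taking orthocomplements in $L$, applying De Morgan, and using the equivalences $\hat{Q}\in P(V)\iff\hat{Q}^{\bot}\in P(V)$ and $\hat{Q}\leq\hat{P}\iff\hat{P}^{\bot}\leq\hat{Q}^{\bot}$, we obtain
\[
(\delta^{i}(\hat{P})_V)^{\bot}=\bigwedge\{\hat{R}\in P(V)| \hat{P}^{\bot}\leq\hat{R}\}=\delta^{o}(\hat{P}^{\bot})_V .
\]
Equivalently, and perhaps more transparently: $(\delta^{o}(\hat{P}^{\bot})_V)^{\bot}$ is the complement of the smallest projection of $P(V)$ above $\hat{P}^{\bot}$, hence the largest projection of $P(V)$ below $(\hat{P}^{\bot})^{\bot}=\hat{P}$, namely $\delta^{i}(\hat{P})_V$. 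Either way, $\delta^{i}(\hat{P})_V\wedge\delta^{o}(\hat{P}^{\bot})_V=\delta^{i}(\hat{P})_V\wedge(\delta^{i}(\hat{P})_V)^{\bot}=0$, which is precisely $\delta^{i}(\hat{P})_V\bot\delta^{o}(\hat{P}^{\bot})_V$; since $V$ was arbitrary, this holds contextwise.

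I do not anticipate a genuine obstacle. The only point requiring attention is the compatibility of the lattice structure of $P(V)$ with that of $L$ — that arbitrary joins and meets of projections from $V$ stay in $V$ and are computed identically in $V$ and in $L$ — since this is exactly what licenses pushing the orthocomplement through the daseinisation formula in the computation above. Note that the argument in fact yields slightly more than the proposition asks: $\delta^{i}(\hat{P})_V$ and $\delta^{o}(\hat{P}^{\bot})_V$ are orthocomplements of one another within $P(V)$, of which the claimed orthogonality is the immediate consequence.
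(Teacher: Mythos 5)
Your proof is correct, and it takes a cleaner route than the paper's. The paper fixes $\tilde{P}:=\delta^i(\hat{P})_V$ and $\tilde{Q}:=\delta^o(\hat{P}^{\bot})_V$ and argues by contradiction: assuming $\tilde{P}\wedge\tilde{Q}=R>0$, it uses the extremality of the two daseinisations in $P(V)$ together with distributivity in the Boolean algebra $P(V)$ to force $\tilde{Q}\leq\tilde{P}^{\bot}$, whence $\tilde{P}\wedge\tilde{Q}=0$, a contradiction; the join condition $\tilde{P}\vee\tilde{Q}=1$ is then dispatched ``similarly.'' You instead prove the sharper identity $\delta^o(\hat{P}^{\bot})_V=\bigl(\delta^i(\hat{P})_V\bigr)^{\bot}$ in one De Morgan computation, using that $P(V)$ is closed under $\bot$ and that $\hat{Q}\leq\hat{P}\iff\hat{P}^{\bot}\leq\hat{Q}^{\bot}$; both halves of the paper's orthogonality condition ($\wedge=0$ and $\vee=1$) then drop out at once, with no case left to ``similarly.'' Your one point of care --- that the possibly infinite meets and joins in the daseinisation formulas are computed the same way in $P(V)$ as in $L$ --- is exactly the right thing to flag, and it is the same completeness fact the paper tacitly uses when it asserts $R^{\bot}\geq\tilde{Q}$ for $R=\tilde{P}\wedge\tilde{Q}$. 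It is worth noting that the paper itself sketches, in the remark following Proposition \ref{prop2}, a direct argument close in spirit to yours (deriving $\delta^o(\hat{P}^{\bot})\leq\delta^i(\hat{P})^{\bot}$ from minimality), but stops at the inequality; your version upgrades it to the exact orthocomplement relation, which is the statement actually known from D{\"o}ring's work and is strictly more informative.
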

	
	\begin{proof}
		For any $V\in \mathcal{V}(\mathcal{H})$. Let us define $\tilde{P}:=\delta^i (\hat{P})_V$. We have $\tilde{P}\in V$ and $\tilde{P}\leq \hat{P}$. Furthermore, for any $\hat{P}_V\in V$, if $\hat{P}_V \leq\hat{P}$, then $\hat{P}_V \leq\tilde{P}$.
		
		Now, let's define $\tilde{Q}:=\delta^o (\hat{P}^{\bot})_V$. We have $\tilde{Q}\in V$ and $\tilde{Q}\geq \hat{P}^{\bot}$. Similarly, for any $\hat{Q}_V \in V$, if $\hat{Q}_V \geq \hat{P}^{\bot}$, then $\hat{Q}_V \geq \tilde{Q}$.
		
		Since $\tilde{P}, \tilde{Q}\in V$, we have $\tilde{P}\bot\tilde{Q}$ if and only if $\tilde{P}\wedge\tilde{Q}=0, \tilde{P}\vee\tilde{Q}=1$.
		
		Assume that $\tilde{P}\wedge\tilde{Q}=R > 0$. Then we have $R^{\bot}=\tilde{P}^{\bot}\vee\tilde{Q}^{\bot}$. Since $R\leq \tilde{P}$ and $\tilde{P}\leq\hat{P}$.	We can imply that $R\leq \hat{P}$, and $R^{\bot}\geq \hat{P}^{\bot}$. Hence $R^{\bot}\geq \tilde{Q}$. 
		
		Since $R^{\bot}=\tilde{P}^{\bot}\vee\tilde{Q}^{\bot}$, we have $R^{\bot}\wedge\tilde{Q}=(\tilde{P}^{\bot}\vee\tilde{Q}^{\bot})\wedge\tilde{Q}=\tilde{P}^{\bot}\wedge\tilde{Q}$. Since $R^{\bot}\geq \hat{P}^{\bot}, \tilde{Q}\geq\hat{P}^{\bot}$. We can conclude that $R^{\bot}\wedge\tilde{Q}\geq\hat{P}^{\bot}$ and $R^{\bot}\wedge\tilde{Q}\geq\tilde{Q}$. Consequently, $\tilde{P}^{\bot}\wedge\tilde{Q}\geq\tilde{Q}$. Since $\tilde{P}^{\bot}\wedge\tilde{Q}\leq\tilde{Q}$, we have $\tilde{P}^{\bot}\wedge\tilde{Q} = \tilde{Q}$. Therefore $\tilde{Q}\leq\tilde{P}^{\bot}$. As a result, we can imply that $\tilde{P}\wedge\tilde{Q}\leq\tilde{P}\wedge\tilde{P}^{\bot}=0$. This contradicts the assumption that $\tilde{P}\wedge\tilde{Q}=R > 0$. Therefore $\tilde{P}\wedge\tilde{Q}=0$ holds. 
		
		Similarly, $\tilde{P}\vee\tilde{Q}=1$ holds. We can imply that $\tilde{P}\bot\tilde{Q}$.
	\end{proof}
	\begin{proposition}\label{prop2}
		$\delta^i(\hat{P})_V\leq\delta^o(\hat{P})_V$, i.e. $\delta^i(\hat{P})\leq\delta^o(\hat{P})$ holds for every context.
	\end{proposition}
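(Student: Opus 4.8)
The plan is to reduce the inequality to the elementary observation that both daseinisations are order-comparable with $\hat{P}$ itself, and then to conclude by transitivity of $\leq$. First I would fix a context $V\in\mathcal{V}(\mathcal{H})$ and record that $\delta^i(\hat{P})_V=\bigvee\{\hat{Q}\in P(V)\,|\,\hat{Q}\leq\hat{P}\}$ is the supremum of a family of projections each lying below $\hat{P}$; since $\hat{P}$ is an upper bound of that family, and the supremum taken in $P(V)$ coincides with the one taken in the ambient lattice $L=P(N(\mathcal{H}))$, this yields $\delta^i(\hat{P})_V\leq\hat{P}$. Dually, $\delta^o(\hat{P})_V=\bigwedge\{\hat{Q}\in P(V)\,|\,\hat{P}\leq\hat{Q}\}$ is the infimum of a family of projections each lying above $\hat{P}$, so $\hat{P}\leq\delta^o(\hat{P})_V$. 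Chaining the two gives $\delta^i(\hat{P})_V\leq\hat{P}\leq\delta^o(\hat{P})_V$ in $L$, and since $V$ was arbitrary this is exactly the contextwise claim $\delta^i(\hat{P})\leq\delta^o(\hat{P})$. This is consistent with the proof of Proposition \ref{prop1}, which already used $\delta^i(\hat{P})_V\leq\hat{P}$ and $\delta^o(\hat{P}^{\bot})_V\geq\hat{P}^{\bot}$ as immediate from the definitions.

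Accordingly the statement is essentially a one-line consequence of the definitions, and the only step that warrants a word of care — the nearest thing here to an obstacle — is the bookkeeping claim that the join and meet defining $\delta^i(\hat{P})_V$ and $\delta^o(\hat{P})_V$ genuinely furnish bounds for $\hat{P}$ in $L$, since a priori they are a supremum and an infimum computed inside the Boolean algebra $P(V)$, while $\hat{P}$ need not belong to $P(V)$. I would dispatch this by recalling that $P(V)$, being the projection lattice of a von Neumann subalgebra, is closed under arbitrary joins and meets, and that these agree with the corresponding operations in $L$; in the finite-dimensional contexts emphasized in this paper $P(V)$ is moreover finite, so the join and meet are finite and the point is entirely routine.
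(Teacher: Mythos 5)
Your proposal is correct and is exactly the argument the paper has in mind: the paper's own proof consists of the single remark that the claim ``can be easily verified by the definition of daseinisation,'' and your chain $\delta^i(\hat{P})_V\leq\hat{P}\leq\delta^o(\hat{P})_V$ is precisely that verification, spelled out. Your additional care about the join and meet in $P(V)$ agreeing with those in $L$ is a sensible refinement but does not change the route.
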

	
	\begin{proof}
		This proposition can be easily verified by the definition of daseinisation.
	\end{proof}
	
	It should be noted that we can utilize the conclusion of Proposition \ref{prop2} to prove Proposition \ref{prop1}. From Proposition \ref{prop2}, we have $\delta^i(\hat{P})\leq \hat{P}\leq \delta^o(\hat{P})$. By orthogonal complement, it follows that $\hat{P}^{\bot}\leq \delta^i(\hat{P})^{\bot}$ holds for every context. Furthermore, since $\delta^o(\hat{P}^{\bot})$ represents the least upper bound of $\hat{P}^{\bot}$ across every context, we obtain $\delta^o(\hat{P}^{\bot})\leq \delta^i(\hat{P})^{\bot}$ for every context. Consequently, it satisfies $\delta^i(\hat{P})\wedge \delta^o(\hat{P}^{\bot})\leq \delta^i(\hat{P})\wedge \delta^i(\hat{P})^{\bot}=0$. Therefore, $\delta^i(\hat{P})$ is orthogonal to $\delta^o(\hat{P}^{\bot})$, which implies that Proposition \ref{prop1} holds.
	
	Consider two properties $\hat{P}$ and $\hat{P}^{\bot}$ being orthogonal complements. By the law of non-contradiction, when we assign $\hat{P}$ to true, $\hat{P}^{\bot}$ must be assigned false. When assigning truth values in every context $V$, the truth assignment must make the ``reflection" of $\hat{P}$ true, and that of $\hat{P}^{\bot}$ false. According to the above two propositions, assigning true to $\delta^i(\hat{P})_V$ in context $V$ meets the requirement. Thus, we propose assigning $\delta^i(\hat{P})_V$ to true in every context $V$ if we would like to assign $\hat{P}$ to true. 
	
	\begin{proposition}
		In the sense of sets, we have $\delta^o (\hat{P})=\uparrow \hat{P}$, and $\delta^i (\hat{P})=\downarrow \hat{P}$.
	\end{proposition}
	\begin{proof}
		$\forall V\in \mathcal{V}(\mathcal{H})$, we have $\hat{P}\leq\delta^o (\hat{P})_V$, hence $\delta^o (\hat{P})_V\in\uparrow\hat{P}$. It follows that $\delta^o (\hat{P})\subseteq\uparrow\hat{P}$.
		\\
		$\forall \hat(Q)\in\uparrow\hat{P}$, we choose $\{\hat{Q},\hat{Q}^{\bot}\}$ as the generators to construct the context $V_{\hat{Q}}$ such that $\delta^o (\hat{P})_{V_{\hat{Q}}}=\hat{Q}$. It follows that $\uparrow\hat{P}\subseteq\delta^o (\hat{P})$.	Therefore $\delta^o (\hat{P})=\uparrow \hat{P}$.
		\\
		Similarly, we have $\delta^i (\hat{P})=\downarrow \hat{P}$.
	\end{proof}
	
	Given a property lattice $L$, we start with constructing a lattice $I(L)$ by taking the order ideal of every element in $L$. It is well-known that $L$ is isomorphic to $I(L)$. Next, we add disjunctive elements to $I(L)$ through MacNeille completion, obtaining the lattice $D(L)$, where the elements are downsets of one or more elements in $L$. We have an embedding $I(L)\hookrightarrow D(L)$, and the added disjunctive elements are neither redundant nor missing. Any element in $D(L)$ that is not in the form of an ideal is a union of ideals, and all unions of different ideals are in $D(L)$.
	
	Inspired by Dan Marsden\cite{marsden2010comparing}, we can consider ``globality" and ``contextuality" separately. For outer daseinisation, we denoted by $\uparrow\hat{P}$ the set of all true properties that $\hat{P}$ being true implies. These properties are bound to be true whenever they appear in any context. $\uparrow\hat{P}\cap P(V)$ represents all true properties in context $V$ (when $\hat{P}$ is assigned true), where $P(V)$ denotes all projections in $V$. $\bigwedge\{\uparrow\hat{P}\cap P(V)\}=\delta^o (\hat{P})_V$ holds.
	
	For inner daseinisation, we denoted by $\downarrow\hat{P}$ the set of all properties that imply $\hat{P}$ to be true. These properties are true in specific contexts, but not in all contexts. $\downarrow\hat{P}\cap P(V)$ represents a set of properties that imply $\hat{P}$ to be true in context $V$. $\bigvee\{\downarrow\hat{P}\cap P(V)\}=\delta^i (\hat{P})_V$ holds. That is, in context $V$, we assign the set $\downarrow\hat{P}\cap P(V)$ to true. Note that we only know that the top element $\delta^i (\hat{P})_V$ of the set $\downarrow\hat{P}\cap P(V)$ is true, but we do not know the truth value of other elements. 
	
	\begin{definition}
		Let $L$ be a property lattice, and define the mapping
		\begin{align*}
			\overline{\delta^i}:L&\to D(L) \\
			\hat{P}&\mapsto \delta^i (\hat{P})=\downarrow\hat{P}
		\end{align*}
		where $\delta^i (\hat{P})=\downarrow\hat{P}$ is understood in terms of sets.
	\end{definition}
	
	Similar to $\underline\delta^o$, $\overline{\delta^i}$ preserves all meets and is an order-preserving injection, but not a surjection. It is evident that $\overline{\delta^i(\hat{0})}=\downarrow\hat{0}=\{\hat{0}\}$, and $\overline{\delta^i (\hat{1})}=\downarrow\hat{1}=L$. For meets, we have
	\begin{align*}
		\forall \hat{P},\hat{Q}\in L:\overline{\delta^i (\hat{P})}\wedge\overline{\delta^i (\hat{Q})}=\overline{\delta^i (\hat{P}\wedge\hat{Q})}.
	\end{align*}
	However, for joins, we have
	\begin{align*}
		\forall \hat{P},\hat{Q}\in L:\overline{\delta^i (\hat{P})}\vee\overline{\delta^i (\hat{Q})}\leq\overline{\delta^i (\hat{P}\vee\hat{Q})}.
	\end{align*}
	
	In general, $\overline{\delta^i (\hat{P})}\vee\overline{\delta^i (\hat{Q})}$ does not have the form of $\overline{\delta^i (\hat{R})}$ for any projection $\hat{R}\in L$.
	
	Next, we show that the elements in $D(L)$ constitute a complete Heyting algebra. For a downset $\overline{S}\in D(L)$, we define $\overline{S}_V :=\alpha_V( \bigvee\{\overline{S}\cap P(V)\})$. Let $\overline{S},\overline{T}\in D(L)$ be two downsets, then
	\begin{align*}
		\forall V\in \mathcal{V}(\mathcal{H}):(\overline{S}\wedge\overline{T})_V &= \overline{S}_V \cap \overline{T}_V\\
		(\overline{S}\vee\overline{T})_V &= \overline{S}_V \cup \overline{T}_V.
	\end{align*}
	
	For each $\overline{S}\in D(L)$, the functor
	\begin{align*}
		\overline{S}\wedge \_ :D(L)\to D(L)
	\end{align*}
	has a right adjoint
	\begin{align*}
		\overline{S}\to \_ : D(L)\to D(L).
	\end{align*}
	The Heyting implication is given by the adjunction
	\begin{align*}
		\overline{S}\wedge\overline{R}\leq\overline{T} \quad \text{iff} \quad \overline{R}\leq(\overline{S}\to\overline{T}).
	\end{align*}
	This implies that
	\begin{align*}
		(\overline{S}\to\overline{T})&=\bigvee\{\overline{R}\in D(L) | \overline{S}\wedge\overline{R}\leq\overline{T}\}\\
		&=\{r\in L | \forall a\in\overline{S}, a\wedge r\in\overline{T}\}.
	\end{align*}
	The contextwise definition is: for every $V\in \mathcal{V}(\mathcal{H})$,
	\begin{align*}
		(\overline{S}\to\overline{T})_V =\bigvee\{(\overline{S}\to\overline{T})\cap P(V)\}.
	\end{align*}
	The Heyting negation $\neg$ is defined for every $\overline{S}\in D(L)$ as follows:
	\begin{align*}
		\neg\overline{S}:=(\overline{S}\to\overline{0}).
	\end{align*}
	
	Therefore, we propose the following theorem.
	\begin{theorem}
		The elements of $D(L)$ constitute a complete Heyting algebra.
	\end{theorem}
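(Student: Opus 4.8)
The plan is to verify the three facts that together make $D(L)$ a complete Heyting algebra: that $D(L)$ is a complete lattice, that binary meet distributes over arbitrary joins (the frame law), and that the right adjoint of $\overline{S}\wedge\_$ forced by this law is exactly the implication $\overline{S}\to\_$ displayed above. Since the operations $\wedge,\vee,\to,\neg$ on $D(L)$ have already been written down, the real content of the proof is to check that these formulas are well posed inside $D(L)$ and satisfy the defining adjunctions.

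First I would record that $D(L)$ is closed under arbitrary intersections and arbitrary unions of downsets: an intersection of (nonempty) downsets of $L$ is again a downset, a union of downsets is again a downset, and by construction of $D(L)$ through the MacNeille-type completion every such downset already lies in $D(L)$ (equivalently, each element of $D(L)$ is $\bigcup_{\hat{P}\in\overline{S}}\downarrow\hat{P}$, so $D(L)$ is the full lattice of nonempty downsets of $L$). Hence $\bigwedge_i\overline{S_i}=\bigcap_i\overline{S_i}$ and $\bigvee_i\overline{S_i}=\bigcup_i\overline{S_i}$ exist in $D(L)$, with bottom $\overline{0}=\downarrow\hat{0}=\{\hat{0}\}$ and top $\downarrow\hat{1}=L$; this makes $D(L)$ a complete lattice whose finite meets and joins agree contextwise with the stated formulas $(\overline{S}\wedge\overline{T})_V=\overline{S}_V\cap\overline{T}_V$ and $(\overline{S}\vee\overline{T})_V=\overline{S}_V\cup\overline{T}_V$.

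Next I would prove the infinite distributive law
\[
\overline{S}\wedge\bigvee_i\overline{T_i}=\bigvee_i(\overline{S}\wedge\overline{T_i}),
\]
which, once $\wedge$ and $\bigvee$ have been identified with $\cap$ and $\bigcup$, is the elementary set identity $\overline{S}\cap\bigcup_i\overline{T_i}=\bigcup_i(\overline{S}\cap\overline{T_i})$. By the standard characterization, a complete lattice satisfying this law is precisely a complete Heyting algebra, with implication necessarily given by $\overline{S}\to\overline{T}=\bigvee\{\overline{R}\in D(L)\mid\overline{S}\wedge\overline{R}\leq\overline{T}\}$; so at this point the theorem is essentially proved. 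To match the explicit display and make the structure concrete, I would then show this join equals $\{r\in L\mid\forall a\in\overline{S},\ a\wedge r\in\overline{T}\}$: this set is a downset of $L$ (if $r'\leq r$ and $a\wedge r\in\overline{T}$, then $a\wedge r'\leq a\wedge r$, hence $a\wedge r'\in\overline{T}$ since $\overline{T}$ is a downset), and unfolding membership gives $\overline{S}\wedge\overline{R}\leq\overline{T}\Leftrightarrow\overline{R}\leq(\overline{S}\to\overline{T})$, so it indeed realizes the abstract supremum. Finally $\neg\overline{S}:=(\overline{S}\to\overline{0})$ is then the pseudocomplement, completing the verification.

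I expect the only delicate point to be the bookkeeping about what $D(L)$ actually contains: one must be sure that the candidate implication $\{r\in L\mid\forall a\in\overline{S},\ a\wedge r\in\overline{T}\}$ and all the unions used to form joins genuinely lie in $D(L)$, i.e.\ that the completion used to build $D(L)$ really yields the full downset lattice and is closed under these operations. Once that identification is secured, the distributive law is pure set theory and the adjunction is a one-line unfolding, so no serious computation remains.
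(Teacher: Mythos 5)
Your proposal is correct, and it is in fact more substantive than the proof the paper gives. The paper's own argument only checks associativity and commutativity of $\wedge$ and $\vee$ contextwise (as $\cap$ and $\cup$), exhibits the top $\downarrow\hat{1}=L$ and bottom $\downarrow\hat{0}$, and then simply \emph{restates} the adjunction $\overline{S}\wedge\overline{R}\leq\overline{T}\ \text{iff}\ \overline{R}\leq(\overline{S}\to\overline{T})$ as if it were given, without ever proving that the right adjoint exists or that the displayed formula realizes it. You instead identify $D(L)$ with the full lattice of (nonempty) downsets of $L$, obtain completeness from closure under arbitrary intersections and unions, derive the existence of $\overline{S}\to\_$ from the infinite distributive law $\overline{S}\cap\bigcup_i\overline{T_i}=\bigcup_i(\overline{S}\cap\overline{T_i})$ via the standard frame/cHa correspondence, and then verify that $\{r\in L\mid\forall a\in\overline{S},\ a\wedge r\in\overline{T}\}$ is a downset and is the largest $\overline{R}$ with $\overline{S}\cap\overline{R}\subseteq\overline{T}$ (for the converse inclusion: if $r\in\overline{R}'$ and $a\in\overline{S}$ then $a\wedge r\in\overline{S}\cap\overline{R}'\subseteq\overline{T}$ since both are downsets). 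This buys you the one thing the paper's proof actually needs but omits, namely the adjunction itself. Your flagged ``delicate point'' is also well taken: the paper calls the construction a MacNeille completion, but since $L$ is already a complete lattice that completion would add nothing; the paper's subsequent description (``all unions of different ideals are in $D(L)$'') shows the intended object is the full downset lattice, which is exactly the reading your proof requires and is consistent with how $\vee$ and $\to$ are used throughout the paper.
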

	
	\begin{proof}
		For any $\overline{R},\overline{S},\overline{T}\in D(L)$ and any $V\in \mathcal{V}(\mathcal{H})$, we have
		\begin{align*}
			(\overline{R}\wedge(\overline{S}\wedge\overline{T}))_{V}=\overline{R}_{V}\cap (\overline{S}_{V}\cap\overline{T}_{V}) = (\overline{R}_{V}\cap\overline{S}_{V})\cap\overline{T}_{V}=((\overline{R}\wedge\overline{S})\wedge\overline{T})_{V}
		\end{align*}
		and
		\begin{align*}
			(\overline{R}\vee(\overline{S}\vee\overline{T}))_{V}=\overline{R}_{V}\cup (\overline{S}_{V}\cup\overline{T}_{V}) = (\overline{R}_{V}\cup\overline{S}_{V})\cup\overline{T}_{V}=((\overline{R}\vee\overline{S})\vee\overline{T})_{V}.
		\end{align*}
		Thus, the commutative and associative laws hold in $D(L)$.
		
		In $D(L)$, there exist the maximum element $1:=\downarrow\hat{1}$ and the minimum element $0:=\downarrow\hat{0}$, where $\downarrow\hat{1}$ represents the set of all properties in the property lattice $L$ and $\downarrow\hat{0}$ represents the set of the bottom element $\hat{0}$ of $L$.
		
		The Heyting implication is given by the above adjunction:
		\begin{align*}
			\overline{S}\wedge\overline{R}\leq\overline{T} \quad iff \quad \overline{R}\leq(\overline{S}\to\overline{T}).
		\end{align*}
		
		Thus, the elements of $D(L)$ constitute a complete Heyting algebra.
	\end{proof}
	It is worth noting that this Heyting algebra coincides with the one defined by Bob Coecke in \cite{coecke2002quantum}. The Heyting algebra constructed by Coecke indeed corresponds to inner daseinisation. A downset $\downarrow\hat{P}$ is associated with an actuality set\textsuperscript{\cite{coecke2002quantum}}, and the Heyting algebra generated by all downsets is the injective hull of the property lattice. 
	
	In outer daseinisation, we assign true to every element in $\uparrow\hat{P}$. Specifically, in each context $V$, every element in the set $\uparrow\hat{P}\cap P(V)$ is assigned as true. The conjunction $\bigwedge\{\uparrow\hat{P}\cap P(V)\}$ is definitely assigned as true. In inner daseinisation, we assign the set $\downarrow\hat{P}$ as true. In each context $V$, the set $\downarrow\hat{P}\cap P(V)$ is assigned as true. However, only the element $\bigvee\{\downarrow\hat{P}\cap P(V)\}$ is determined to be true in the current context. 
	
	Each property in the property lattice, as a generator of a certain context, corresponds to a linear functional in a one-to-one manner. When a property is not one-dimensional, its splitting is usually not unique. The splitting of a property represents the join of generators of properties in a particular context, declaring the context in which the property exists. The Heyting algebra $D(L)$ is generated by downsets of elements of the property lattice, where the additional disjunctive elements represent the declaration that a property is true only in those specific contexts, rather than in all contexts.

	\section{Natural revision in topos quantum logic}
	We propose a reasoning system that combines the characteristics of two reasoning modes, static and dynamic, as follows: static intuitionistic quantum logic is used as the reasoning rules, and dynamic operations induced by actions are used for truth value assignments.
	
	
	As Coecke et al.\cite{coecke2002quantum,coecke2004logic,coecke2004sasaki} have demonstrated, a projection measurement is an external operation, denoted as $\varphi_a^*$, parameterized by the projector $a$. This operation acts upon elements within the property lattice, leading to new properties. Moreover, this operation can be extended to the Heyting algebra generated by property sets. When the projection measurement operation is applied to elements in the Heyting algebra that are assigned true (corresponding to the downset of some properties), it can alter the truth value assignments of elements within the Heyting algebra, resulting in new elements being assigned true. For the sake of simplification and provided that it does not lead to any ambiguity, we do not differentiate between the symbols of the $\varphi_a^*$ operation when applied to either the property lattice or the Heyting algebra.
	
	We use the contextuality to map properties to the corresponding elements in the Heyting algebra and perform the appropriate truth value reasoning. Regarding the changes in truth value assignments caused by external actions, elements within the Heyting algebra that were assigned true in the preceding moment are mapped to elements assigned true in the subsequent moment according to the operational rules of projection measurements.
	
	A consequence relation $\vdash\subseteq Fml\times Fml$ is determined by a binary relation obtained from antecedents and consequents.
	In Heyting algebra $D(L)$, we take the elements that are assigned to true as the antecedents of a consequence relation, and the elements that are inferred to be true according to the partial order relation based on the true-valued elements as the consequents. A truth-value assignment in the Heyting algebra corresponds to the antecedents that are associated with a consequence relation. By adding new messages, the antecedents of a consequence relation are changed, which is equivalent to revising one consequence relation to another.
	
	When examining natural revision based on quantum logic, we should consider operations in the two types of reasoning - static and dynamic - as two revision operators for consequence relations. We suppose that a consequence relation $\vdash\subseteq Fml\times Fml$ occurs within Heyting algebra $D(L)$, where the formula set $Fml$ represents elements in the Heyting algebra, i.e., $Fml=D(L)$. There are two revision operators for revising the consequence relation: one is to add new message using the truth-value reasoning of static quantum logic, and the other is to add new message by dynamic changes of truth-value assignments.
	
	\begin{lemma}
		The $\wedge$ operation within a Heyting algebra can be seen as a type of revision operation, referred to as static revision. Every formula $\alpha\in D(L)$ induces an operator on $\mathcal{C}$
		\begin{align*}
			\bar{\alpha}:\mathcal{C}\to\mathcal{C}
		\end{align*}
		via
		\begin{align*}
			\bar{\alpha}(\vdash_1)=\vdash_2 \quad \Leftrightarrow \quad \alpha\wedge A_1 = A_2
		\end{align*}
		in which $A_1$ and $A_2$ represents antecedents of $\vdash_1$ and $\vdash_2$ respectively.
	\end{lemma}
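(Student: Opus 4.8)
\emph{Proof plan.} The plan is to read the prescription $\bar\alpha(\vdash_1)=\vdash_2\ \Leftrightarrow\ \alpha\wedge A_1=A_2$ as a composite of three maps and to check that each is well defined and lands where required. Exactly as in Section~3, but now carried out inside the complete Heyting algebra $D(L)$ established above, every $X\in D(L)$ determines a relation $\vdash_X$ on $Fml\times Fml=D(L)\times D(L)$ by
\begin{align*}
	\beta\vdash_X\gamma\quad :\Leftrightarrow\quad X\wedge\beta\leq\gamma ,
\end{align*}
and $X$ is recoverable from $\vdash_X$ because $\{\gamma\in D(L)\mid 1\vdash_X\gamma\}=\{\gamma\in D(L)\mid X\leq\gamma\}$, whose meet in $D(L)$ is $X$ itself. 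So the first step is: from $\vdash_1$ extract the unique antecedent $A_1:=\bigwedge\{\gamma\in D(L)\mid 1\vdash_1\gamma\}$, using completeness of $D(L)$; then $A_2:=\alpha\wedge A_1$ is a genuine element of $D(L)$; and $\vdash_2:=\vdash_{A_2}$ is the unique relation with antecedent $A_2$. Composing yields a single-valued $\bar\alpha\colon\mathcal{C}\to\mathcal{C}$, once $\vdash_{A_2}\in\mathcal{C}$ is confirmed.

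The second step is to verify that $\vdash_{A_2}$ is a consequence relation. Reflexivity is $A_2\wedge\alpha\leq\alpha$. For Cut, from $A_2\wedge(\alpha\wedge\beta)\leq\gamma$ and $A_2\wedge\alpha\leq\beta$ one gets $A_2\wedge\alpha=(A_2\wedge\alpha)\wedge(A_2\wedge\alpha)\leq(A_2\wedge\alpha)\wedge\beta=A_2\wedge(\alpha\wedge\beta)\leq\gamma$. For Restricted Monotonicity, $A_2\wedge(\alpha\wedge\beta)\leq A_2\wedge\alpha\leq\gamma$ whenever $A_2\wedge\alpha\leq\gamma$. These are routine order manipulations in the lattice reduct of $D(L)$, parallel to (and, thanks to commutativity of $\wedge$, simpler than) the $\vdash_x$ case discussed in Section~3.

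The third step justifies calling $\wedge$ a revision operation: $\wedge$ on $D(L)$ satisfies the axioms imposed on $*$ in Section~3, namely $1\wedge\alpha=\alpha\wedge 1=\alpha$; the adjunction $\alpha\wedge\_\dashv\alpha\to\_$, which is part of the Heyting structure of $D(L)$; $\alpha\wedge\beta\leq\alpha$; $\alpha\bot\beta:\Leftrightarrow(\alpha\leq\neg\beta)$ holds iff $\alpha\wedge\beta=0$, immediate from $\neg\beta=(\beta\to 0)$ and the adjunction; and $\beta\leq\alpha$ iff $\alpha\wedge\beta=\beta$. Hence the family $(\vdash_X)_{X\in D(L)}$ together with $(\bar\alpha)_{\alpha\in D(L)}$ reproduces the CRS pattern $F(\alpha,\vdash_\beta)=\vdash_{\alpha*\beta}$ with $*=\wedge$, which is precisely what the statement calls static revision. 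Note also that $\alpha\wedge A_1\in D(L)$, so $\vdash_2$ remains inside the same Heyting algebra $D(L)$; this is the feature that later distinguishes static revision from the dynamic operator built from $\wedge_s$.

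The only delicate point I anticipate is the well-definedness bookkeeping of the first step: one must state explicitly that $X\mapsto\vdash_X$ is injective on the relevant class $\mathcal{C}$, so that ``the antecedent of $\vdash_1$'' is unambiguous, and that completeness of $D(L)$ is exactly what lets $A_1$ be recovered as a meet. It is also worth deciding how to treat the degenerate value $\alpha\wedge A_1=\downarrow\hat{0}$: there $\vdash_{A_2}$ becomes the relation under which everything entails everything, i.e. the inconsistent consequence relation of the CRS, with $\bar\alpha(\vdash_1)=0$ exactly when $A_1\leq\neg\alpha$. Everything else is routine lattice computation.
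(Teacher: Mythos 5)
Your proposal is correct and its core (the third step, verifying the five axioms of the revision operation $*$ from Section~3 for $\wedge$ on $D(L)$: the unit law, the adjunction $\alpha\wedge\_\dashv\alpha\to\_$, $\alpha\wedge\beta\leq\alpha$, the orthogonality equivalence, and $\beta\leq\alpha\Leftrightarrow\alpha\wedge\beta=\beta$) is exactly the paper's proof, with your adjunction argument for the orthogonality clause being a cleaner version of the paper's element-wise unpacking of $\neg\overline{T}$. The additional well-definedness bookkeeping and the verification that $\vdash_{A_2}$ satisfies Reflexivity, Cut, and Restricted Monotonicity are sound supplementary details that the paper leaves implicit.
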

	\begin{proof}
		For any $\overline{S}\in D(L)$, we have $\overline{S}\subseteq L$ and $1=L$. Hence, $1\wedge\overline{S}=\overline{S}\wedge 1=\overline{S}$ holds.
		
		By the definition of Heyting implication, $\overline{S}\wedge \_ \dashv \overline{S} \to \_$ holds.
		
		For any $\overline{S},\overline{T}\in D(L)$ and any $V\in\mathcal{V(\mathcal{H})}$, $(\overline{S}\wedge\overline{T})_{V}=\overline{S}_{V}\cap\overline{T}_{V}\subseteq\overline{S}_{V}$. So $\overline{S}\wedge\overline{T}\leq\overline{S}$ holds.
		
		If $\overline{S}\leq\neg\overline{T}$, then for any $x\in\overline{S}$, we have $x\in\neg\overline{T}$. $\neg\overline{T}=(\overline{T}\to\overline{0})=\{r\in L|\forall a\in\overline{T},a\wedge r=0\}$. So, for any $x\in\overline{S}$ and any $a\in \overline{T}$, we have $a\wedge x=0$. Hence, $\overline{S}\wedge\overline{T}=0$ holds. Conversely, if $\overline{S}\wedge\overline{T}=0$, then for any $x\in\overline{S}$ and any $a\in \overline{T}$, we have $x\wedge a=0$. So, $x\in \neg\overline{T}$ holds for any $x\in \overline{S}$. Hence, $\overline{S}\leq\neg\overline{T}$ holds.
		
		$\overline{T}\leq\overline{S}\Leftrightarrow \forall V\in\mathcal{V(\mathcal{H})},\overline{T}_{V}\subseteq\overline{S}_{V}\Leftrightarrow \forall V\in\mathcal{V(\mathcal{H})},\overline{S}_{V}\cap\overline{T}_{V}=\overline{T}_{V}\Leftrightarrow \overline{S}\wedge\overline{T}=\overline{T}$.
	\end{proof}
	This revision operation takes the elements within Heyting algebra $D(L)$ as operators and performs the $\wedge$ operation of Heyting algebra on new message and the antecedents of the original consequence relation to obtain new antecedents within the same lattice. These new antecedents correspond to a new consequence relation.
	
	\begin{lemma}
		The $\wedge_s$ operation outside a Heyting algebra can be viewed as another type of revision operation, referred to as dynamic revision. Every formula corresponding to a closed space $a\in L$ induces an operator on $\mathcal{C}$
		\begin{align*}
			\bar{\varphi^*_a}:\mathcal{C}\to\mathcal{C}
		\end{align*}
		via
		\begin{align*}
			\bar{\varphi^*_a}(\vdash_{D(L)_1})=\vdash_{D(L)_2} \quad \Leftrightarrow \quad a\wedge_s A_{D(L)_1} = A_{D(L)_2}
		\end{align*}
		in which $A_{D(L)_1}$ and $A_{D(L)_2}$ represents antecedents of $\vdash_{D(L)_1}$ and $\vdash_{D(L)_2}$ respectively.
	\end{lemma}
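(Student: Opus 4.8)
The plan is to mirror the proof of the previous lemma, replacing the Heyting meet $\alpha\wedge\_$ by the Sasaki projection $\varphi_a^* = a\wedge_s\_$, while keeping track of the feature that distinguishes the dynamic case from the static one: the projection does not act inside a fixed Heyting algebra but transports a truth-value assignment from $D(L)_1$ to a different algebra $D(L)_2$. First I would extend $\varphi_a^*$ from the property lattice $L$ to $D(L)$ by declaring, for a downset $\overline{S}\in D(L)$,
\[
\varphi_a^*(\overline{S}) := \bigcup_{x\in\overline{S}}\downarrow(a\wedge_s x),
\]
and then check that (i) this is again a downset, hence an element of $D(L)$; (ii) it restricts to $\downarrow(a\wedge_s\hat{P})$ on a principal ideal $\downarrow\hat{P}$, so it is a genuine lifting of the projection operation, compatible with $\overline{\delta^i}$; and (iii) since $a\wedge_s x\leq a$ for every $x$, the image $\varphi_a^*(\overline{S})$ is a downset of the interval sublattice $[0,a]$. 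Identifying $[0,a]$ with the property lattice of the subspace $a$, one obtains that $\varphi_a^*$ maps $D(L)$ onto $D([0,a])$, and this $D([0,a])$ is the target algebra $D(L)_2$; by the theorem of Section~4 applied to the property lattice $[0,a]$, it is again a complete Heyting algebra. This identification is what makes the operator ``dynamic'': a projection measurement moves truth-value assignments out of $D(L)_1$ into $D([0,a]) = D(L)_2$.

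Next I would verify that $\varphi_a^*$ satisfies the formal conditions required of a revision operation, in exactly the five-fold pattern of the preceding proof and of the list in Section~3: (1) $\varphi_1^* = \mathrm{id}$, since $1\wedge_s x = 1\wedge(\neg 1\vee x) = x$, whence $\varphi_1^*(\overline{S}) = \overline{S}$; (2) an adjunction $\varphi_a^*\dashv\varphi_{a,*}$ holds, where $\varphi_{a,*}(\overline{T}) := \{\,r\in L \mid a\wedge_s r\in\overline{T}\,\}$ is a downset and, because $\overline{T}$ is down-closed, $\varphi_a^*(\overline{S})\leq\overline{T}$ iff $a\wedge_s x\in\overline{T}$ for all $x\in\overline{S}$ iff $\overline{S}\leq\varphi_{a,*}(\overline{T})$, which is the pointwise lift of the Sasaki adjunction $a\wedge_s\_\dashv a\rightsquigarrow_s\_$; (3) $\varphi_a^*(\overline{S})\leq\downarrow a$; (4) the orthogonality clause, $\overline{S}\leq\neg(\downarrow a)$ iff $\varphi_a^*(\overline{S}) = 0$, which follows from $a\wedge_s x = 0 \Leftrightarrow x\leq\neg a$; and (5) the fixpoint clause, $\overline{S}\leq\downarrow a$ iff $\varphi_a^*(\overline{S}) = \overline{S}$, which follows from the orthomodular identity $a\wedge_s x = x \Leftrightarrow x\leq a$.

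Having these, I would conclude well-definedness of the induced operator $\bar{\varphi_a^*}\colon\mathcal{C}\to\mathcal{C}$. Given $\vdash_{D(L)_1}\in\mathcal{C}$ with antecedent set $A_{D(L)_1}$, its image $A_{D(L)_2} := a\wedge_s A_{D(L)_1} = \varphi_a^*(A_{D(L)_1})$ is a downset of $[0,a]$ of the admissible form, hence a truth-value assignment in $D([0,a])$; the binary relation it determines via the partial order of $D([0,a])$ satisfies Reflexivity, Cut and Restricted Monotonicity by the same argument as in the static lemma, since $D([0,a])$ is a complete Heyting algebra. Setting $\bar{\varphi_a^*}(\vdash_{D(L)_1}) := \vdash_{D(L)_2}$ then yields precisely the stated equivalence $\bar{\varphi_a^*}(\vdash_{D(L)_1}) = \vdash_{D(L)_2} \Leftrightarrow a\wedge_s A_{D(L)_1} = A_{D(L)_2}$.

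The main obstacle I expect is the first step. Because $a\wedge_s\_$ is neither commutative nor a lattice homomorphism --- it preserves neither all meets nor all joins, unlike the $\wedge$ of the static case --- one must check with care that the pointwise extension to the disjunctive elements of $D(L)$ is well defined and order-compatible, that it genuinely lands in and exhausts the auxiliary algebra $D([0,a])$, and that the candidate right adjoint $\varphi_{a,*}$ really is a bona fide adjoint on these enlarged lattices. Once the extension and the identification $D(L)_2 = D([0,a])$ are secured, the remaining verifications are routine transcriptions of the static argument.
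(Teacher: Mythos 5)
Your proposal is correct in outline but takes a genuinely different, and more ambitious, route than the paper. The paper's own proof never lifts $\wedge_s$ to $D(L)$ at all: it simply verifies the five axioms of a revision operation from Section~3 for the Sasaki projection directly on the property lattice $L$ --- the unit laws $1\wedge_s a=a\wedge_s 1=a$, the Sasaki adjunction $a\wedge_s\_\dashv a\rightsquigarrow_s\_$, the inequality $a\wedge_s b\leq a$, the orthogonality clause $a\leq\neg b\Leftrightarrow a\wedge_s b=0$ (proved via the orthomodular law), and the fixpoint clause $b\leq a\Leftrightarrow a\wedge_s b=b$ --- in exact parallel with the static lemma but one level down, with $\neg$ the orthocomplement of $L$. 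Your extension $\varphi_a^*(\overline{S})=\bigcup_{x\in\overline{S}}\downarrow(a\wedge_s x)$ addresses a point the paper leaves informal (the lemma does apply $a\wedge_s$ to antecedents living in $D(L)$), and your verifications of the unit, adjunction, $\varphi_a^*(\overline{S})\leq\downarrow a$ and fixpoint clauses at the lifted level all go through, needing only the monotonicity of $a\wedge_s\_$ in its second argument.

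Two caveats. First, your orthogonality clause (4) fails as written if $\neg(\downarrow a)$ denotes the Heyting negation of $D(L)$: by the paper's formula $\neg\overline{T}=\{r\in L\mid\forall b\in\overline{T},\,b\wedge r=0\}$ one gets $\neg(\downarrow a)=\{r\in L\mid a\wedge r=0\}$, which in an orthomodular lattice strictly contains $\downarrow(\neg a)$; thus $\overline{S}\leq\neg(\downarrow a)$ does not force $x\leq\neg a$ for every $x\in\overline{S}$, and hence does not force $\varphi_a^*(\overline{S})=0$. The biconditional holds only with $\downarrow(\neg a)$, the image of the orthocomplement, in place of $\neg(\downarrow a)$ --- which is in effect what the paper uses by staying inside $L$. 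Second, your identification of the target algebra with $D([0,a])$ conflicts with the paper's explicit remark that the elements of the two lattices $D(L)_1$ and $D(L)_2$ are identical and only the truth-value assignments differ: the paper treats dynamic revision as a change of assignment on one and the same $D(L)$, not as a passage to the smaller algebra attached to the subspace $a$. Neither point is fatal, but both need repair before your argument can stand in for the paper's.
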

	\begin{proof}
		For any $a\in L$, we have $1\wedge_s a=1\wedge(\neg 1\vee a)=1\wedge(0\vee a)=1\wedge a=a$ and $a\wedge_s 1=a\wedge(\neg a\vee 1)=a\wedge 1=a$.
		
		By the definition of Sasaki adjoint, $a\wedge_s \_\dashv a\rightsquigarrow_s \_$ holds.
		
		For any $a,b\in L$, $a\wedge_s b=a\wedge(\neg a\vee b)\leq a$ holds.
		
		If $a\leq\neg b$, then $a\leq\neg b\Rightarrow b\leq \neg a\Rightarrow \neg a\vee b=\neg a\Rightarrow a\wedge_s b=a\wedge (\neg a\vee b)=a\wedge\neg a=0$. Conversely, if $a\wedge_s b=a\wedge(\neg a\vee b)=0$, then $\neg a=\neg a\vee 0=\neg a\vee (a\wedge (\neg a\vee b))$. Clearly $\neg a\leq \neg a\vee b$. By the orthomodular law, we have $\neg a\vee (a\wedge (\neg a\vee b))=\neg a\vee b$. Thus, $\neg a=\neg a\vee b\Rightarrow b\leq \neg a\Rightarrow a\leq \neg b$.
		
		If $b\leq a$, then by the dual orthomodular law, $a\wedge(\neg a\vee b)=b$. Thus, $a\wedge_s b=a\wedge(\neg a\vee b)=b$. Conversely, if $a\wedge_s b=a\wedge(\neg a\vee b)=b$, then $a\wedge(\neg a\vee b)\leq a$ holds. Thus $b\leq a$ holds. 
	\end{proof}
	This revision operation takes a element $a$ in a property lattice as operator and performs the Sasaki projection $\varphi^*_a$ on new message and the antecedents of the original consequence relation in a lattice to obtain new antecedents within another lattice. These new antecedents also correspond to a new consequence relation.
	
	It is worth noting that elements in the two lattices are identical, but the truth-value assignments of the elements may differ.
	
	\begin{theorem}
		Given a consequence relation $\vdash$ within a Heyting algebra $H$. Let $\alpha\in H$ and $a\in L$ be formulas. There are two types of revision operators, called static and dynamic revision respectively, to revise the consequence relation. The former, denoted by $\bar{\alpha}$, maps the consequence relation to another one within the same lattice $H$. The latter, denoted by $\overline{\varphi^*_a}$, maps the consequence relation to another one within a different lattice.
	\end{theorem}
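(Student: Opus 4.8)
The plan is to read this theorem as the consolidation of the two preceding lemmas, so the proof amounts to unpacking their content and checking that the two induced maps have the stated domains and codomains. Throughout I fix $H=D(L)$ as the ambient Heyting algebra and record a consequence relation $\vdash$ by its antecedent, i.e.\ by the truth-value assignment $A\in H$ of true-valued elements, the consequents being all elements above $A$ under the partial order; this is exactly the correspondence between a truth-value assignment and a consequence relation described above. Since $Fml=H$ here, ``revision operator'' means a map $\mathcal{C}\to\mathcal{C}$ of the CRS form $F(\,\cdot\,,\vdash)$, and by the consequence revision system construction of Engesser et al.\ recalled in Section~3 it suffices, for each of the two operations, to exhibit a revision operation $*$ satisfying the five clauses and to set $F(a,\vdash_b)=\vdash_{a*b}$.

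For the static operator I would simply invoke the static revision lemma, which verifies that the Heyting meet $\wedge$ on $D(L)$ satisfies all five clauses ($1\wedge\overline{S}=\overline{S}$, the adjunction $\overline{S}\wedge\_\dashv\overline{S}\to\_$, $\overline{S}\wedge\overline{T}\leq\overline{S}$, the orthogonality clause, and $\overline{T}\leq\overline{S}\Leftrightarrow\overline{S}\wedge\overline{T}=\overline{T}$). Hence every $\alpha\in H$ induces $\bar{\alpha}\colon\mathcal{C}\to\mathcal{C}$ with $\bar{\alpha}(\vdash_1)=\vdash_2\Leftrightarrow\alpha\wedge A_1=A_2$. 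Because $H$ is closed under $\wedge$, the new antecedent $A_2=\alpha\wedge A_1$ again lies in $H$, and the associated $\vdash_2$ again satisfies reflexivity, cut and restricted monotonicity. Thus $\bar{\alpha}$ maps $\vdash$ to another consequence relation within the \emph{same} lattice $H$.

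For the dynamic operator I would invoke the dynamic revision lemma together with Coecke's reading of the Sasaki projection. The operation $\varphi^*_a=a\wedge_s\_$ is parameterized by a closed subspace $a\in L$, which sits at a level different from the antecedents; it is an external action (a projection measurement) and, extended to downsets, carries $D(L)_1\to D(L)_2$, where $D(L)_1$ and $D(L)_2$ have identical underlying sets but carry different truth-value assignments --- the ``before'' and ``after'' of the measurement. The dynamic revision lemma verifies the five clauses for $\wedge_s$ (using $1\wedge_s a=a=a\wedge_s 1$, the Sasaki adjunction, $a\wedge_s b\leq a$, and the orthomodular and dual orthomodular identities for the remaining two), so every $a\in L$ induces $\overline{\varphi^*_a}\colon\mathcal{C}\to\mathcal{C}$ with $\overline{\varphi^*_a}(\vdash_{D(L)_1})=\vdash_{D(L)_2}\Leftrightarrow a\wedge_s A_{D(L)_1}=A_{D(L)_2}$. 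Since the new antecedent $A_{D(L)_2}$ lives in $D(L)_2$, the resulting consequence relation is situated within a \emph{different} lattice.

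The step I expect to be the main obstacle is making rigorous the sense in which $\overline{\varphi^*_a}$ genuinely changes the lattice rather than being an endomorphism of $H$: on the level of underlying sets $a\wedge_s\_$ is a self-map of $L$, and of $D(L)$, so the change of lattice must be located not in the carrier but in the truth-value assignment, which the projection measurement alters because it changes the state of the system. I would handle this exactly as Coecke does --- treating $a$ as a parameter rather than an argument, so that the domain and codomain of $\varphi^*_a$ are distinct ``moments'' $D(L)_1$ and $D(L)_2$ --- and contrast it with static revision, where $\alpha$ and the antecedents live at the same level and only the true-valued elements are refined while the algebra $H$ stays fixed. This contrast is precisely what separates the two operators, matches them to ``expansion'' and ``revision'' in the AGM scheme, and is also what makes their composition order matter, as taken up next.
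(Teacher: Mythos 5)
Your proposal is correct and follows essentially the same route as the paper, whose entire proof is the one-line remark that the theorem follows directly from the two preceding lemmas. Your additional discussion of why $\overline{\varphi^*_a}$ changes the lattice via the truth-value assignment rather than the carrier set is a faithful elaboration of the paper's surrounding commentary, not a different argument.
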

	\begin{proof}
		It is straightforward to demonstrate by using the above two lemmas.
	\end{proof}

	When we use formulas as operators to revise the consequence relation, there are two sources of formulas: those from within the Heyting algebra and those from outside. When a formula from within the Heyting algebra is used as an operator to revise the consequence relation, the revision operator is applied to the antecedents of the consequence relation based on the $\wedge$ operation within Heyting algebra. This results in a new consequence relation within the same lattice. When a formula from outside the Heyting algebra is used as an operator, the selected formula corresponds to a closed subspace, namely a projection operation, and is used as a parameter $a$ with Sasaki projection $\varphi_a^*$ operation to revise the antecedents of the original consequence relation. This results in a new consequence relation within another lattice.
	
	For example, consider the two equations below.
	\begin{align*}
		\varphi^*_a(B\wedge C) = D \qquad B\wedge\varphi^*_a(C) = E
	\end{align*}
	In a Heyting algebra, there exists an element $C$ that is assigned to true, which corresponds to a consequence relation $\vdash_C$ based on an antecedent $\{C\}$. For the former equation, upon adding new message ``element $B$ is assigned to true", a static revision operation $\wedge$ is performed on consequence relation $\vdash_C$ to obtain antecedents $\{B, C\}$, which correspond to consequence relation $\vdash_{B\wedge C}$. Then, upon adding new message ``projecting onto space $a$", a dynamic revision operation $\wedge_s$ is employed to revise consequence relation $\vdash_{B\wedge C}$, resulting in an antecedent $\{\varphi^*_a(B\wedge C) = D\}$ and corresponding to consequence relation $\vdash_D$. For the latter, firstly upon adding new message ``projecting onto space $a$", a dynamic revision operation $\wedge_s$ is employed to revise consequence relation $\vdash_C$, resulting in an antecedent $\{\varphi^*_a(C)\}$ and corresponding to consequence relation $\vdash_{\varphi^*_a(C)}$. Then, upon adding new message ``element $B$ is assigned to true", a static revision operation $\wedge$ is performed on consequence relation $\vdash_{\varphi^*_a(C)}$ to obtain an antecedent $\{B\wedge\varphi^*_a(C) = E\}$, which corresponds to consequence relation $\vdash_E$. As shown in the Figure 1.
	
	\begin{figure}[H]
		\centering
		\includegraphics[width=\textwidth]{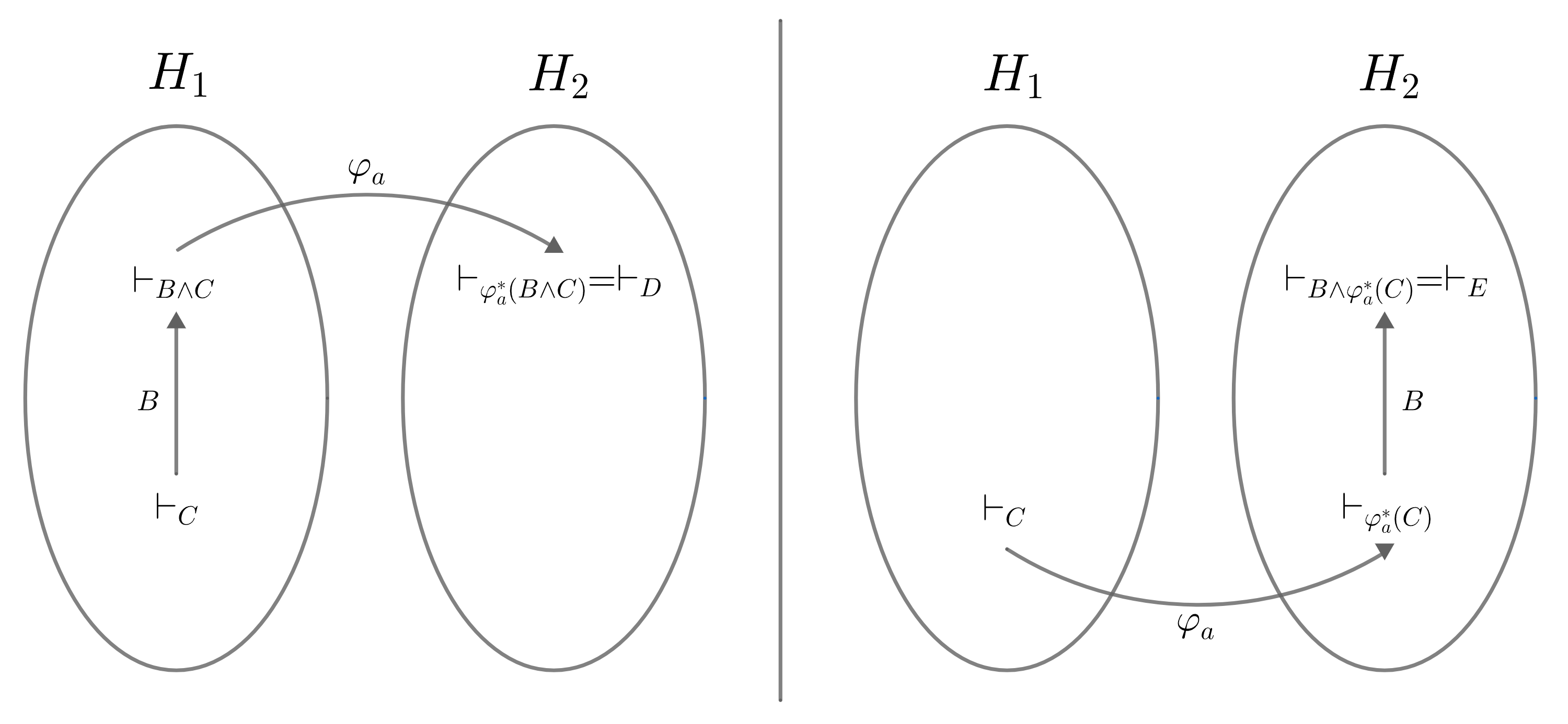}
		\caption{Natural revision}
		\label{Fig.Revision}
	\end{figure}
	
	Both static and dynamic revision involve changing the truth-value assignments of elements in the Heyting algebra. The difference lies in that static revision directly adds a new element assigned to be true, and performs $\wedge$ operation on the original elements to change their truth values. On the other side, although it is known from $a\wedge_s b\leq a$ that the new element $a$ is assigned to true, for dynamic revision, the new element is applied to elements of the original lattice by $\wedge_s$ operation to obtain elements of another lattice which is assigned to true.
	
	If we regard natural revision as a consequence revision system (CRS), this system should have two revision operations, $\wedge$ and $\wedge_s$, which correspond to two internalizing connectives, $\rightarrow$ and $\rightsquigarrow_s$ respectively. However, for the $a\wedge_s\_$ operation, since the element $a$ is a parameter of external operation, it is denoted as $\varphi_a^*(\_)$. Correspondingly, the $a$ in the internalizing connective $a\rightsquigarrow_s\_$ should also be a parameter, denoted as $\varphi_{a,*}(\_)$. In this manner, the revision operator and so-called internalizing connective obtained by Sasaki adjoint are not suitable for the formalism of internalizing connectives. That is, the modus ponens obtained by Sasaki adjoint in standard quantum logic is not suitable for ordinary truth-value reasoning. The consequence relation only occurs within the Heyting algebra, namely the formula set $Fml$ in $\vdash\subseteq Fml\times Fml$ needs to be selected from the elements in the same lattice. The Heyting algebra has the $\wedge$ operation, and new message can transform the formula set $Fml_1$ into another formula set $Fml_2$ in the same lattice. On the other hand, the Sasaki projection is an external operation of the Heyting algebra, and new message can transform one formula set $Fml_{H_1}$ in the lattice $H_1$ into another formula set $Fml_{H_2}$ in another lattice $H_2$.
	
	
	\section*{Conclusion}
	
	In this paper, we point out that the revision theory based on standard quantum logic cannot correspond to truth-value reasoning in quantum logic. The reasoning rules generated by Sasaki adjoint in standard quantum logic are not truth-value reasoning within the system, but truth-value changes obtained by taking external elements as parameters of the operation. To accurately describe truth-value reasoning in quantum logic, it is necessary to reconsider truth-value assignment in quantum logic. In this paper, we use topos quantum theory and Heyting algebra to characterize the truth-value reasoning of quantum logic and indicate that the methodology proposed by Isham et al. in topos quantum theory does not apply to truth-value reasoning in the Heyting algebra sense. To give truth-value reasoning that satisfies Heyting algebra, we redefine the structure in topos quantum theory by taking the downset of a property as its inner daseinisation in all contexts, construct a Heyting algebra with downsets as elements, and assign truth values to downsets satisfies the law of non-contradiction. In this formalism, truth-value reasoning in the sense of Heyting algebra is carried out. Moreover, we regard the Sasaki adjoint as an operation that selects a property from the property lattice as a parameter and maps elements in one Heyting algebra to those in another. By using internal and external operations in Heyting algebra, the reasoning rules of quantum logic can be characterized. Quantum logic is determined by both static and dynamic reasoning.
	
	After giving truth-value assignment and reasoning rules for quantum logic, we consider giving a natural revision theory based on quantum logic. We correspond the static and dynamic reasoning in quantum logic to two revision operations, which are static and dynamic revision, respectively. The static reasoning corresponds to a revision based on the $\wedge$ operation inside the Heyting algebra. The dynamic reasoning corresponds to a revision based on the $\wedge_s$ operation outside the Heyting algebra. The consequence relation $\vdash\subseteq Fml\times Fml$ arises within a Heyting algebra and the set of formulas $Fml$ is constituted by elements in the lattice. Static revision revises the consequence relation to another consequence relation in the same lattice, while dynamic revision revises the consequence relation within one lattice to another consequence relation in another lattice with the same elements but different truth-value assignments.
	
	The research of truth-value assignment and truth-value reasoning in intuitionistic quantum logic can be extended to mixed states of quantum mechanics. The investigation of how entanglement and superposition phenomena in quantum mechanics manifest in quantum logic is worth studying. In quantum computation, unitary operators represent evolution operations and can also serve as dynamic operators. This further expands the natural revision theory and provides a theoretical foundation for quantum computation.

	\newpage
	\nocite{*}
	\bibliographystyle{spmpsci}
	\bibliography{SubmitRevisionref}
	
\end{document}